\newtheorem{definition}{\noindent{\it Definition}}[section]
\newtheorem{theorem}{\noindent{\it Theorem}}[section]
\newtheorem{lemma}{\noindent{\it Lemma}}[section]
\newtheorem{remark}{\noindent{\it Remark}}[section]
\newtheorem{corollary}{\noindent{\it Corollary}}[section]
\newtheorem{example}{\noindent{\it Example}}[section]
\begin{document}

\title{On the Construction of Nonbinary Quantum BCH Codes}
\author{Giuliano G. La Guardia
\thanks{Giuliano Gadioli La Guardia is with Department of Mathematics and Statistics,
State University of Ponta Grossa -- UEPG, 84030-900, Ponta Grossa, PR, Brazil.
E-mail:~{\tt \small gguardia@uepg.br}.
}}

\maketitle

\begin{abstract}
Four quantum code constructions generating several new families of
good nonbinary quantum nonprimitive non-narrow-sense
Bose-Chaudhuri-Hocquenghem codes are presented in this paper. The
first two ones are based on Calderbank-Shor-Steane (CSS)
construction derived from two nonprimitive
Bose-Chaudhuri-Hocquenghem codes. The third one is based on
Steane's enlargement of nonbinary CSS codes applied to suitable
sub-families of nonprimitive non-narrow-sense
Bose-Chaudhuri-Hocquenghem codes. The fourth construction is
derived from suitable sub-families of Hermitian dual-containing
nonprimitive non-narrow-sense Bose-Chaudhuri-Hocquenghem codes.
These constructions generate new families of quantum codes whose
parameters are better than the ones available in the literature.
\end{abstract}

\textbf{\emph{Index Terms}} -- \textbf{Bose-Chaudhuri-Hocquenghem
codes, quantum codes, cyclotomic coset}

\section{Introduction}

Constructions of quantum codes with good parameters are much
investigated in the literature
\cite{Calderbank:1998,Steane:1999,Bierbrauer:2000,Ketkar:2006,Chen:2005,Cohen:1999,Grassl:1999B,Grassl:2003,Xi:2004,MaLu:2006,Salah:2007,Hamada:2008,LaGuardia:2009,LaGuardia:2010,LaGuardia:2011}.
The CSS construction, the Hermitian construction, as well as the
symplectic construction are the most utilized construction methods
in order to generate good quantum codes. In this context, many
classical codes involved in these constructions are
Bose-Chaudhuri-Hocquenghem codes
\cite{BoRCha:1960,BoRChaf:1960,Hocque:1959}. Interesting works
concerning this class of codes were presented in the literature
\cite{Yue:2000,MaLu:2006,Salah:2007,LaGuardia:2009,LaGuardia:2010,LaGuardia:2011}.
More precisely, the dimension and sufficient condition (in some
cases, necessary and sufficient condition) for dual (Euclidean and
Hermitian) containing Bose-Chaudhuri-Hocquenghem codes were
investigated.

In \cite{Salah:2007}, the authors constructed families of good
nonbinary quantum (narrow-sense) codes by showing useful
properties of cyclotomic cosets. More specifically, they computed
the exact dimension of classical narrow-sense
Bose-Chaudhuri-Hocquenghem codes of length $n$ with minimum
distance of order $\mathcal{O}(n^{1/2})$ as well as establishing
useful conditions for identifying dual-containing (Euclidean as
well as Hermitian) Bose-Chaudhuri-Hocquenghem codes. Following
this approach, the authors of \cite{MaLu:2006,Yue:2000} also have
constructed quantum Bose-Chaudhuri-Hocquenghem codes by using
properties of suitable cyclotomic cosets and also dual-containing
codes. In \cite{LaGuardia:2009,LaGuardia:2010}, new families of
nonbinary quantum Bose-Chaudhuri-Hocquenghem codes were
constructed by means of the CSS, Hermitian and also by using
Steane's code construction applied to suitable sub-families of
Bose-Chaudhuri-Hocquenghem codes. Finally, new quantum MDS codes
of non Reed-Solomon type are constructed in \cite{LaGuardia:2011}.

Motivated by the construction of new nonbinary quantum codes with
good parameters, we propose four quantum code constructions
generating new families of good codes. These new families consist
of quantum codes whose parameters are better than the ones
available in the literature. In other words, fixing $n$ and $d$,
the new quantum codes achieve greater values of the number of
encoded qudits than the codes available in the literature (see
Tables I to IV). In this paper we only consider nonprimitive
codes. In order to construct these new families it is necessary to
know the exactly dimension of the classical
Bose-Chaudhuri-Hocquenghem codes used for this purpose. This is a
difficult task since the dimension of these codes is not known. To
solve this problem, we show suitable properties of cyclotomic
cosets, providing the exact dimension and also lower bounds for
the minimum distance of the corresponding quantum codes as in the
Euclidean as well as in the Hermitian case. Additionally, by
applying the concept of linear congruence, we prove (for codes of
prime length) the existence of, at least, one $q$-ary coset
containing two consecutive integers. By means of this result we
also construct new families of good nonbinary quantum codes, since
this technique allows the construction of quantum codes with great
dimension and great minimum distance.

The proposed families have parameters

\begin{itemize}
\item ${[[n, n-4(c-2)-2, d\geq c]]}_{q}$,
\end{itemize}
where $q \geq 4$ is a prime power, $n$ is an integer such that
$\gcd (q, n) = 1$, $(q - 1) \mid n$, $m= \
{{\operatorname{ord}}_{n}}(q) = 2$ and $2\leq c \leq r$, where $r$
is such that $n = r(q-1)$;

\begin{itemize}
\item ${[[n, n-2mr, d\geq r+2]]}_{q}$,
\end{itemize}
where $m= \ {{\operatorname{ord}}_{n}}(q)\geq 2$, $n$ is a prime
number and $r$ is the number of cosets satisfying suitable
conditions (see Theorem~\ref{nonpri7});

\begin{itemize}
\item ${[[n, n-m(2r-1), d\geq r+2]]}_{q}$,
\end{itemize}
where $m= \ {{\operatorname{ord}}_{n}}(q)\geq 2$, $n$ is a prime
number and $q\geq 3$;

\begin{itemize}
\item ${[[n, n-4c, d\geq c+2]]}_{q}$,
\end{itemize}
where $n > q$ is an integer with $\gcd (q, n) = 1$, $(q - 1) \mid
n$, $m= \ {{\operatorname{ord}}_{n}}(q)=2$, $1\leq c \leq r - 3$
and $r > 3$ is such that $n = r(q-1)$;

\begin{itemize}
\item ${[[n, n-4c-2, d\geq c+2]]}_{q}$,
\end{itemize}
where $2\leq c \leq r-2$, $q > 3$, $n = r(q^2 - 1)$, $r>1$ and $m=
\ {{\operatorname{ord}}_{n}}(q^2)=2$;

\begin{itemize}
\item ${[[n, n-2mr, d\geq r+2]]}_{q}$,
\end{itemize}
where $q \geq 3$ is a prime power, $n > q^2$ is a prime number
such that $\gcd (q, n) = 1$, $m= \
{{\operatorname{ord}}_{n}}(q^2)\geq 2$ and $r$ is the number of
cosets satisfying suitable conditions (see
Theorem~\ref{nonpri7h}).

This paper is structured as follows. In Section~\ref{sec2} we
recall basic concepts on cyclic codes. In Section~\ref{sec3}, the
four new quantum code constructions are presented. More precisely:
in Subsection~\ref{secII}, new families of nonprimitive quantum
codes of length $n$, where $m= \ {{\operatorname{ord}}_{n}}(q)=2$,
are generated; in Subsection~\ref{secIII}, new families of $q$-ary
quantum nonprimitive non-narrow-sense Bose-Chaudhuri-Hocquenghem
codes of prime length, where $m= \
{{\operatorname{ord}}_{n}}(q)\geq 2$, are constructed; in
Subsection~\ref{secIV}, new families of quantum codes derived from
Steane's code construction are shown; in Subsection~\ref{secV},
the construction of new families of quantum codes derived from
nonprimitive non-narrow-sense Hermitian dual-containing
Bose-Chaudhuri-Hocquenghem codes are proposed. In
Section~\ref{sec4}, the parameters of the new quantum codes are
compared with the ones available in the literature. Finally, in
Section~\ref{sec5}, a summary of this paper is given.

\section{Review of Cyclic Codes}\label{sec2}

This section presents some basic concepts on cyclic codes, necessary for the
development of this paper. For more details, we refer the reader to
\cite{Macwilliams:1977}.

Throughout this paper, $p$ denotes a prime number, $q\neq 2$ is a
prime power, ${\mathbb F}_{q}$ is a finite field with $q$
elements, $n$ is the code length (we always consider that $\gcd
(n, q) = 1$). If $C$ is an ${[n, k, d]}_{q}$ code then $C^{\perp}$
denotes its Euclidean dual and ${C}^{{\perp}_{H}}$ denotes its
Hermitian dual. As usual, $m= \ {{\operatorname{ord}}_{n}}(q)$
denotes the multiplicative order of $q$ modulo $n$ (i.e., the
smallest positive integer $m$ such that $n$ divides $q^{m}-1$) and
${\mathbb{C}}_{[s]}$ denotes the $q$-ary cyclotomic coset modulo
$n$ containing $s$, defined by ${\mathbb{C}}_{s} = \{s, sq,
sq^{2}, sq^{3},\ldots, sq^{m_{s}-1} \}$ ($m_{s}$ is the smallest
positive integer such that $sq^{m_{s}} \equiv s$ mod $n$), where
$s$ is not necessarily the smallest number in the coset
${\mathbb{C}}_{[s]}$. The minimal polynomial over ${\mathbb
F}_{q}$ of $\beta\in {\mathbb F}_{q^m}$ is the monic polynomial of
smallest degree, $M(x)$, with coefficients in ${\mathbb F}_{q}$
such that $M(\beta)=0$. If $\beta ={\alpha}^i$ for some primitive
$n$th root of unity $\alpha \in {\mathbb F}_{q^m}$ then the
minimal polynomial of $\beta ={\alpha}^i$ is denoted by
${M}^{(i)}(x)$. It is well known that $ x^{n}- 1 =
\displaystyle\prod_{s} M^{(s)}(x)$, where $M^{(s)}(x)$ denotes the
minimal polynomial of ${\alpha}^{s} \in {\mathbb F}_{q^m}$ over
${\mathbb F}_{q}$, and $s$ runs through the coset representatives
mod $n$. Let $C$ be a cyclic code of length $n$. Then there is
only one monic polynomial $g(x)$ with minimal degree in $C$ such
that $g(x)$ is the generator polynomial of $C$, where $g(x)$ is a
factor of $x^{n}-1$. The dimension of $C$ equals $n - \deg g(x)$.
The (Euclidean) dual code ${C}^{\perp}$ of a cyclic code is cyclic
and has generator polynomial ${{g(x)}^{\perp}}= x^{\deg
h(x)}h(x^{-1})$, where $h(x)=(x^{n}-1)/g(x)$. Thus, the code
having generator polynomial $h(x)$ is equivalent to the dual code
${C}^{\perp}$.

Let ${\mathbb F}_{q}$ be a finite field and $n$ a positive integer
with $\gcd(q, n)=1$. Let $\alpha$ be a primitive $n$th root of
unity. Recall that a cyclic code of length $n$ over ${\mathbb
F}_{q}$ is a Bose-Chaudhuri-Hocquenghem (BCH) code of designed
distance $\delta$ if, for some integer $b\geq 0$ we have
$$g(x)= \operatorname{lcm} \{{M}^{(b)}(x), {M}^{(b+1)}(x), \ldots,
{M}^{(b+\delta-2)}(x)\},$$ that is, $g(x)$ is the monic polynomial
of smallest degree over ${\mathbb F}_{q}$ having ${{\alpha}^{b}},
{{\alpha}^{b+1}},\ldots, {{\alpha}^{b+\delta-2}}$ as zeros. If
$n=q^m - 1$ then the BCH code is called primitive and if $b=1$ it
is called narrow-sense.

\begin{theorem}\label{CC}\cite[pg. 201]{Macwilliams:1977}
(The BCH bound) Let $C$ be a cyclic code with generator polynomial
$g(x)$ such that, for some integers $b\geq 0,$ $\delta\geq 1,$ and
$\alpha\in {\mathbb F}_{q^m}$ ($\alpha$ is a primitive $n$th root
of unity), we have $g({\alpha}^{b}) = g({\alpha}^{b+1})= \ldots =
g({\alpha}^{b+\delta-2})=0$, that is, the code has a sequence of
$\delta-1$ consecutive powers of $\alpha$ as zeros. Then the
minimum distance of $C$ is, at least, $\delta.$
\end{theorem}
From the BCH bound, the minimum distance of a BCH code is greater
than or equal to its designed distance $\delta$.

\section{Code Constructions}\label{sec3}

In this section we present our contributions, i.e., the four
quantum code constructions previously mentioned.

\subsection{Construction I - Nonprimitive Codes}\label{secII}

In this subsection we construct new families of nonbinary CSS
codes derived from two distinct classical BCH codes, not
necessarily dual-containing. To proceed further, let us recall the
so-called CSS construction:

\begin{definition}\cite{Calderbank:1998,Steane:1999,Nielsen:2000,Ketkar:2006}
Let $C_1$ and $C_2$ denote two classical linear codes with
parameters ${[n, k_1, d_1]}_{q}$ and ${[n, k_2, d_2]}_{q}$,
respectively, such that $C_2\subset C_1$. Then there exists an
${[[n, K = k_1-k_2, d]]}_{q}$ quantum code, where $d = \min \{
wt(c) \mid c \in (C_1 \backslash C_2) \cup (\displaystyle C_{2}^{\perp}
\backslash \displaystyle C_{1}^{\perp}) \}$.
\end{definition}

We start by showing Lemma~\ref{fewco}:

\begin{lemma}\label{fewco}
Let $q\geq 3$ be a prime power and $n> q$ be an integer such that
$\gcd (q, n) = 1$. Assume also that $(q - 1) \mid n$ and $m= \
{{\operatorname{ord}}_{n}}(q)\geq 2$ hold. Then each one of the
$q$-ary cyclotomic cosets ${\mathbb{C}}_{[lr]}$, where $r$ is such
that $n = r(q-1)$ and $1 \leq l \leq q - 2$ is an integer, has
only one element.
\end{lemma}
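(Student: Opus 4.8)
The plan is to prove the sharper statement that $m_{lr}=1$ for each admissible $l$; by the definition of the cyclotomic coset, this is exactly equivalent to the coset $\mathbb{C}_{[lr]}$ being a singleton. Recall that $\mathbb{C}_{[lr]}=\{lr, lrq, lrq^{2},\ldots\}$ taken modulo $n$, and its cardinality is the least positive integer $m_{lr}$ with $lr\cdot q^{\,m_{lr}}\equiv lr \pmod{n}$. Since always $m_{lr}\geq 1$, it suffices to verify the single congruence $lrq\equiv lr \pmod{n}$, which would force $m_{lr}=1$.

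The heart of the argument is one direct modular computation that uses the hypothesis $n=r(q-1)$:
\[
lrq - lr = lr(q-1) = l\,[\,r(q-1)\,] = ln \equiv 0 \pmod{n}.
\]
Hence $lrq\equiv lr \pmod{n}$, so $m_{lr}=1$ and the coset $\mathbb{C}_{[lr]}$ contains only the element $lr$. To confirm that this is a genuine (nonzero) singleton rather than the trivial coset, I would note that the range $1\leq l \leq q-2$ gives $r\leq lr \leq (q-2)r < (q-1)r = n$, so $lr$ is a nonzero residue modulo $n$; the condition $\gcd(q,n)=1$ guarantees that the cyclotomic coset structure modulo $n$ is well defined to begin with.

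I do not expect a genuine obstacle here: the entire content of the lemma reduces to the divisibility identity $lr(q-1)=ln$, which the hypotheses $(q-1)\mid n$ and $n=r(q-1)$ are tailored to supply. The remaining assumptions ($m=\operatorname{ord}_{n}(q)\geq 2$ in particular) are not required for this specific conclusion and will instead be used in the subsequent dimension and distance computations; I would simply carry them along as standing hypotheses without invoking them in this proof.
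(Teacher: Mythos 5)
Your proof is correct and is essentially the same computation as the paper's: the paper writes the hypothesis as $rq = n + r$ and concludes $(lr)q = l(n+r) \equiv lr \pmod{n}$, which is exactly your identity $lr(q-1) = ln \equiv 0 \pmod{n}$ in a slightly different arrangement. Your added remarks (that $m_{lr}=1$ suffices, and that the hypothesis $\operatorname{ord}_n(q)\geq 2$ is not needed here) are accurate but do not change the argument.
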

\begin{proof}
Since $rq = n + r$ holds, one has $(lr)q = l(n + r) \equiv lr \mod
n$, and therefore $(lr)q^{t} \equiv lr \mod n$, for each $ 1 \leq
t \leq m-1$, proving the lemma.
\end{proof}

Lemma~\ref{fewco} can be applied in order to show
Theorem~\ref{nonpri3}.

\begin{theorem}\label{nonpri3}
Assume that $q > 3$ is a prime power and $n > q$ is an integer
such that $\gcd (q, n) = 1$. Assume also that $(q - 1) \mid n$ and
$m= \ {{\operatorname{ord}}_{n}}(q)=2$ hold. Then there exist
quantum codes with parameters ${[[n, n-4(r-2)-2, d\geq r]]}_{q}$,
where $r$ is such that $n = r(q-1)$.
\end{theorem}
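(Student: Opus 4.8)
The plan is to realize the quantum code via the CSS construction applied to two BCH codes $C_1$ and $C_2$ with $C_2\subset C_1$, controlling the two dimensions through the sizes of the relevant cyclotomic cosets and the two minimum distances through the BCH bound (Theorem~\ref{CC}). Fix a primitive $n$th root of unity $\alpha\in{\mathbb F}_{q^2}$ (recall $m=2$). Since $n=r(q-1)$, a coset ${\mathbb{C}}_{[s]}=\{s,\ sq \bmod n\}$ collapses to a single element precisely when $r\mid s$; thus, by Lemma~\ref{fewco}, the cosets ${\mathbb{C}}_{[lr]}$ with $0\leq l\leq q-2$ are the only singletons, and every other coset modulo $n$ has exactly two elements.

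First I would take $C_1$ to be the BCH code whose defining set is $Z_1={\mathbb{C}}_{[0]}\cup{\mathbb{C}}_{[1]}\cup\cdots\cup{\mathbb{C}}_{[r-2]}$, i.e. the cyclic code having $\alpha^0,\alpha^1,\ldots,\alpha^{r-2}$ among its zeros. These are $r-1$ consecutive powers of $\alpha$, so Theorem~\ref{CC} gives $d(C_1)\geq r$. For the dimension I would observe that ${\mathbb{C}}_{[0]}$ is the unique singleton appearing in $Z_1$ (the indices $1\leq i\leq r-2<r$ are not multiples of $r$, so ${\mathbb{C}}_{[1]},\ldots,{\mathbb{C}}_{[r-2]}$ each have two elements) and that these $r-2$ cosets are pairwise distinct, because the second element $iq \bmod n$ of each lies outside the range $\{1,\ldots,r-2\}$. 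Hence $|Z_1|=1+2(r-2)=2r-3$ and $k_1=n-(2r-3)=n-2r+3$.

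Next I would choose $C_2$ so that its Euclidean dual $C_2^{\perp}$ is (equivalent to) a BCH code with $r-1$ consecutive zeros positioned away from $Z_1$: concretely, let $C_2^{\perp}$ have defining set ${\mathbb{C}}_{[lr]}\cup{\mathbb{C}}_{[lr+1]}\cup\cdots\cup{\mathbb{C}}_{[lr+r-2]}$ for a suitable $l\geq 1$, which by the same counting consists of one singleton ${\mathbb{C}}_{[lr]}$ together with $r-2$ distinct two-element cosets. This yields $d(C_2^{\perp})\geq r$ by Theorem~\ref{CC} and $\dim C_2^{\perp}=n-(2r-3)$, hence $k_2=2r-3$. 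Writing $H=\{-z \bmod n : z\in Z(C_2^{\perp})\}$ for the gap of $C_2$, the containment $C_2\subset C_1$ is equivalent to $Z_1\cap H=\emptyset$. Granting this, the Definition of the CSS construction produces an ${[[n,K,d]]}_q$ quantum code with $K=k_1-k_2=(n-2r+3)-(2r-3)=n-4r+6=n-4(r-2)-2$, while $d=\min\{wt(c)\mid c\in (C_1\backslash C_2)\cup(C_2^{\perp}\backslash C_1^{\perp})\}\geq\min\{d(C_1),\,d(C_2^{\perp})\}\geq r$, as required.

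The main obstacle is the choice of $l$ (equivalently, the placement of the consecutive block of zeros of $C_2^{\perp}$) that makes two requirements hold at once: (i) the block must contain exactly one singleton and $r-2$ distinct two-element cosets, so that $k_2=2r-3$ is exact; and (ii) the negated gap $H$ must be disjoint from $Z_1$, which is what secures $C_2\subset C_1$. Both conditions reduce to verifying that the ``partner'' elements $sq \bmod n$ of the small exponents in the two blocks do not collide, and it is precisely here that the hypotheses $(q-1)\mid n$ and $m=\operatorname{ord}_n(q)=2$ (through Lemma~\ref{fewco}) do the real work; I would expect this disjointness check, rather than the dimension arithmetic or the two BCH bounds, to be the delicate part of the argument.
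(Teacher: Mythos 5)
Your construction is essentially the paper's own proof: the paper takes your $l=1$, defining $C_2$ by removing exactly the cosets ${\mathbb{C}}_{[r]},\ldots,{\mathbb{C}}_{[2r-2]}$ from its defining set, and everything else in your outline (the singleton criterion $r\mid s$, the two BCH bounds, and the dimension count $k_1=n-(2r-3)$, $k_2=2r-3$, $K=n-4(r-2)-2$) matches the paper's argument. The collision check you defer is in fact the bulk of the paper's proof, and it closes exactly as you predict: since $n\mid q^{2}-1$ and the code is nonprimitive, $r\mid q+1$ and $r<q$, so quantities such as $(r-2)q$, $r+(r-2)q$ and $lq+s$ (with $1\leq l,s\leq r-2$) all lie strictly between $0$ and $n$, which rules out every possible coincidence among the $2r-2$ cosets and secures both the containment $C_2\subset C_1$ and the exactness of the dimension count.
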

\begin{proof}
Since it is true that $n \mid (q^{2} - 1)$ and because we consider
only nonprimitive BCH codes, it follows that $r\leq q$. Since
$\gcd (q, n) = 1$ one has $r < q$, so the inequalities $(r-2)q <
n$ and $r + (r-2)q < n$ hold. We next show that all the $q$-ary
cosets (modulo $n$) given by ${\mathbb{C}}_{[0]}=\{ 0 \},
{\mathbb{C}}_{[1]}= \{ 1, \ \ q \}, {\mathbb{C}}_{[2]}= \{ 2, \ \
2q \}, {\mathbb{C}}_{[3]}= \{ 3, \ \ 3q \}, \ldots,
{\mathbb{C}}_{[r-2]}= \{ r-2, \ \ (r-2)q \}, {\mathbb{C}}_{[r]}=
\{ r\}, {\mathbb{C}}_{[r+1]}= \{ r+1, \ \ r + q\},
{\mathbb{C}}_{[r+2]}= \{ r+2, \ \ r + 2q\}, \ldots,
{\mathbb{C}}_{[2r-2]}= \{ 2r-2, \ \ r + (r-2)q \}$, are mutually
disjoint and, with exception of the cosets ${\mathbb{C}}_{[0]}=\{
0 \}$ and ${\mathbb{C}}_{[r]} = \{r\}$, each of them has exactly
two elements.

The cosets ${\mathbb{C}}_{[0]}$ and ${\mathbb{C}}_{[r]}$ have only one element.
Let us show that each one of the other cosets has exactly two elements. Since
$(r-2)q < n$, then the congruence $l \equiv lq$ mod $n$ implies that $l = lq$, where
$1 \leq l\leq r - 2$, which is a contradiction. If $r + s \equiv (r + s)q$ mod $n$, where
$1 \leq s\leq r - 2$, then $r + s = r + sq$, which is a contradiction.

From now on, we show that all these cosets given above and
${\mathbb{C}}_{[0]}$ and ${\mathbb{C}}_{[r]}$ are mutually
disjoint. We only consider the case
${\mathbb{C}}_{[r+l]}={\mathbb{C}}_{[r-s]}$, where $1\leq l, s\leq
r-2$, since the other cases are similar to this one. Seeking a
contradiction, we assume that
${\mathbb{C}}_{[r+l]}={\mathbb{C}}_{[r-s]}$, where $1\leq l, s\leq
r-2$. If the congruence $(r + l) \equiv (r-s)$ mod $n$ holds, one
obtains
\begin{eqnarray*}
(r + l) \equiv (r-s) \ mod \ n \Longrightarrow n \mid (l+s).
\end{eqnarray*}
If $l+s \neq 0$ one has $n\leq l+s$, which is a contradiction. If
$l+s = 0$ holds it implies that $l=-s$, which is a contradiction.

On the other hand, if $(r + l)q \equiv r-s$ mod $n$ holds, one obtains
\begin{eqnarray*}
(r + l)q \equiv r-s \Longrightarrow lq \equiv -s \ mod \ n\\
\Longrightarrow n \mid (lq+s).
\end{eqnarray*}
Since $l, s \leq r-2$ and $r < q$ hold, if $lq+s \neq 0$ holds it
follows that $lq +s < n$, which is a contradiction. If $lq+s = 0$
then $lq =-s$, which is a contradiction. Thus all the $q$-ary
cosets ${\mathbb{C}}_{[0]}, \ {\mathbb{C}}_{[1]}, \ldots,
{\mathbb{C}}_{[r-2]}$, are disjoint from each one of the $q$-ary
cosets ${\mathbb{C}}_{[r]}, \ {\mathbb{C}}_{[r+1]}, \ldots,
{\mathbb{C}}_{[2r-2]}$. Additionally, all the $q$-ary cosets
${\mathbb{C}}_{[0]}, \ {\mathbb{C}}_{[1]}, \ldots,
{\mathbb{C}}_{[r-2]}$, are mutually disjoint and all the $q$-ary
cosets ${\mathbb{C}}_{[r]}, \ {\mathbb{C}}_{[r+1]}, \ldots,
{\mathbb{C}}_{[2r-2]}$, are also mutually disjoint.

Let $C_1$ be the cyclic code generated by the product of the
minimal polynomials
\begin{eqnarray*}
M^{(0)}(x)M^{(1)}(x)\cdot\ldots\cdot M^{(r-2)}(x),
\end{eqnarray*}
and $C_2$ be the cyclic code generated by $g_2(x)$, that is the
product of the minimal polynomials
\begin{eqnarray*}
g_2(x) = \displaystyle\prod_{i} M^{(i)}(x),
\end{eqnarray*} where
$i \notin \{ r, r+1, \ldots, 2r-2 \}$ and $i$ runs through the
coset representatives mod $n$. From construction one has $C_2
\subsetneq C_1$. From the BCH bound, the minimum distance of $C_1$
is greater than or equal to $r$ because its defining set contains
the sequence $0, 1, \ldots, r-2$, of $r-1$ consecutive integers.
Similarly, the defining set of the code $C$ generated by the
polynomial $h(x)= \frac{x^n - 1}{g_2(x)}$ contains the sequence
$r, r+1, \ldots, 2r-2$, of $r-1$ consecutive integers and so, from
the BCH bound, $C$ also has minimum distance greater than or equal
to $r$. Since the code $\displaystyle C_{2}^{\perp}$ is equivalent
to $C$, $\displaystyle C_{2}^{\perp}$ also has minimum distance
greater than or equal to $r$. Therefore, the resulting CSS code
has minimum distance greater than or equal to $r$.

Next we compute the dimension of the corresponding CSS code. We
know that the degree of the generator polynomial of a cyclic code
equals the cardinality of its defining set. Further, the defining
set $Z_1$ of $C_1$ has $r - 1$ disjoint cyclotomic cosets.
Moreover, all of them (except coset ${\mathbb{C}}_{0}$) have two
elements and so, $Z_1$ has $2(r-2)+1$ elements. Therefore, $C_1$
has dimension $k_1 = n - 2(r-2)-1$. Similarly, $C_2$ has dimension
$k_2 = 2(r-2)+1$. Thus the dimension of the corresponding CSS code
equals $n - 4(r-2)-2$. Applying the CSS construction to the codes
$C_1$ and $C_2$, one can get quantum codes with parameters ${[[n,
n - 4(r-2)-2, d \geq r]]}_{q}$.
\end{proof}

We illustrate Theorem~\ref{nonpri3} by means of a graphical scheme:

\begin{eqnarray*}
\underbrace{ \overbrace{ {\mathbb{C}}_{[0]} {\mathbb{C}}_{[1]} \
{\mathbb{C}}_{[2]} \ \ldots \ {\mathbb{C}}_{[r-2]}}^{C_1}}_{C_2} \\
\overbrace{{\mathbb{C}}_{[r]} \ {\mathbb{C}}_{[r+1]} \ldots \
{\mathbb{C}}_{[2r-2]}}^{C}
\ \underbrace{{\mathbb{C}}_{[a_1]} \ldots {\mathbb{C}}_{[a_n]}}_{C_2}.\\
\end{eqnarray*}

The union of the cosets ${\mathbb{C}}_{[0]}, {\mathbb{C}}_{[1]},
\ldots, {\mathbb{C}}_{[r-2]}$ is the defining set of code $C_1$;
the union of the cosets ${\mathbb{C}}_{[0]}, {\mathbb{C}}_{[1]},
\ldots, {\mathbb{C}}_{[r-2]}, {\mathbb{C}}_{[a_1]}, \ldots,
{\mathbb{C}}_{[a_n]}$ is the defining set of $C_2$, where
${\mathbb{C}}_{[a_1]}, \ldots, {\mathbb{C}}_{[a_n]}$ are the
remaining cosets in order to complete the set of all cyclotomic
cosets. The union of the cosets ${\mathbb{C}}_{[r]},
{\mathbb{C}}_{[r+1]}, \ldots, {\mathbb{C}}_{[2r - 2]}$ is the
defining set of $C$.

\begin{corollary}\label{nonpri3cor}
Assume that all the hypothesis of Theorem~\ref{nonpri3} are valid. Then
there exist quantum codes with parameters ${[[n, n-4(c-2)-2, d\geq c]]}_{q}$,
where $2\leq c < r$.
\end{corollary}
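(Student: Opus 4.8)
The plan is to observe that the corollary is a straightforward truncation of the argument already carried out in Theorem~\ref{nonpri3}, specialized to a smaller initial run of consecutive cosets. In the proof of Theorem~\ref{nonpri3} the key structural fact is that the cosets $\mathbb{C}_{[0]}, \mathbb{C}_{[1]}, \ldots, \mathbb{C}_{[r-2]}$ and $\mathbb{C}_{[r]}, \mathbb{C}_{[r+1]}, \ldots, \mathbb{C}_{[2r-2]}$ are mutually disjoint, with $\mathbb{C}_{[0]}$ and $\mathbb{C}_{[r]}$ singletons and all remaining cosets of size two. Since every disjointness and cardinality computation there was established for indices ranging up to $r-2$ (respectively $2r-2$), the same conclusions hold verbatim for any smaller bound. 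So the first step is simply to fix an integer $c$ with $2 \leq c < r$ and restrict attention to the cosets $\mathbb{C}_{[0]}, \mathbb{C}_{[1]}, \ldots, \mathbb{C}_{[c-2]}$ together with $\mathbb{C}_{[r]}, \mathbb{C}_{[r+1]}, \ldots, \mathbb{C}_{[r+c-2]}$.

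Next I would define $C_1$ as the cyclic code whose generator polynomial is the product $M^{(0)}(x)M^{(1)}(x)\cdots M^{(c-2)}(x)$, and define $C_2$ by deleting from the full set of cyclotomic cosets exactly the block $\{r, r+1, \ldots, r+c-2\}$, so that $C_2 \subsetneq C_1$ by construction, exactly as before. The defining set of $C_1$ now contains the $c-1$ consecutive integers $0, 1, \ldots, c-2$, so the BCH bound (Theorem~\ref{CC}) gives minimum distance at least $c$; symmetrically the code $C$ generated by $h(x) = (x^n-1)/g_2(x)$ has defining set containing the $c-1$ consecutive integers $r, r+1, \ldots, r+c-2$, hence also minimum distance at least $c$, and since $C_2^{\perp}$ is equivalent to $C$, the resulting CSS code has minimum distance at least $c$.

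For the dimension, the defining set of $C_1$ consists of the singleton $\mathbb{C}_{[0]}$ together with $c-2$ cosets of size two, giving $2(c-2)+1$ elements, so $k_1 = n - 2(c-2) - 1$; likewise $k_2 = 2(c-2)+1$, whence the CSS dimension is $k_1 - k_2 = n - 4(c-2) - 2$. Applying the CSS construction then yields the parameters $[[n, n-4(c-2)-2, d \geq c]]_q$.

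I do not anticipate any real obstacle here, since the corollary is purely a matter of running the Theorem~\ref{nonpri3} argument with a shorter run of consecutive powers. The only point requiring a moment's care is confirming that the truncated index blocks remain disjoint and retain the claimed cardinalities; but because these properties were proved in the theorem under the weaker hypotheses $l, s \leq r-2$, which dominate the ranges $l, s \leq c-2$ used here, the verification is immediate and no new estimate is needed.
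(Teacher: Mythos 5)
Your proposal is correct and follows essentially the same route as the paper: the paper's proof also takes $C_1 = \langle M^{(0)}(x)\cdots M^{(c-2)}(x)\rangle$ and $C_2$ generated by the product over all coset representatives outside $\{r, r+1, \ldots, r+c-2\}$, then invokes the argument of Theorem~\ref{nonpri3}. Your write-up merely makes explicit the disjointness, cardinality, BCH-bound, and dimension bookkeeping that the paper leaves as ``proceeding similarly,'' and all of those details check out.
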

\begin{proof}
Choose $C_1$ be the cyclic code generated by the product of the minimal polynomials
\begin{eqnarray*}
M^{(0)}(x)M^{(1)}(x)\cdot\ldots\cdot M^{(c-3)}(x)M^{(c-2)}(x),
\end{eqnarray*}
and $C_2$ be the cyclic code generated by the product of the minimal polynomials
\begin{eqnarray*}
\displaystyle\prod_{i} M^{(i)}(x),
\end{eqnarray*} where
$i \notin \{ r, r+1, \ldots, r + c-2 \}$ and $i$ runs through the
coset representatives mod $n$. Proceeding similarly as in the
proof of Theorem~\ref{nonpri3}, the result follows.
\end{proof}


\subsection{Construction II - Codes of Prime Length}\label{secIII}

In this subsection the attention is focused on cyclic codes of
prime length. Among the contributions shown in this section, we
prove there exists at least one $q$-ary cyclotomic coset
containing two consecutive integers (see Lemma~\ref{nonpril1}). In
order to proceed further, let us recall a well-known result from
number theory:

\begin{theorem}\label{folklore}
A linear congruence $ax \equiv b$ (mod $m$), where $a \neq 0$,
admits an integer solution if and only if $d = \gcd (a, m)$
divides $b$.
\end{theorem}

Applying Theorem~\ref{folklore} we can prove Lemma~\ref{nonpril1}:

\begin{lemma}\label{nonpril1}
Assume that $q \geq 3$ is a prime power, $n > q$ is a prime number
and consider $m= \ {{\operatorname{ord}}_{n}}(q)\geq 2$. Then
there exists at least one $q$-ary cyclotomic coset containing two
consecutive integers.
\end{lemma}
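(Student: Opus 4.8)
The plan is to reduce the existence of a coset containing two consecutive integers to the solvability of a single linear congruence, and then invoke Theorem~\ref{folklore}. Two residues lie in the same $q$-ary cyclotomic coset precisely when one is obtained from the other by multiplication by a power of $q$ modulo $n$. Hence, to produce consecutive integers $j$ and $j+1$ in a common coset, it suffices to find an integer $j$ and an exponent $k$ with $j q^{k} \equiv j+1 \pmod{n}$. The simplest choice is $k=1$, which turns the problem into solving $(q-1)\,j \equiv 1 \pmod{n}$ for $j$.

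First I would verify that this congruence is solvable. By Theorem~\ref{folklore}, the congruence $(q-1)\,j \equiv 1 \pmod{n}$ admits an integer solution if and only if $\gcd(q-1, n)$ divides $1$, that is, if and only if $\gcd(q-1,n)=1$. Since $n$ is prime, this $\gcd$ is either $1$ or $n$, and it equals $n$ exactly when $n \mid (q-1)$, i.e. when $q \equiv 1 \pmod{n}$. But the hypothesis $m = \operatorname{ord}_{n}(q) \geq 2$ forbids $q \equiv 1 \pmod{n}$, so $\gcd(q-1,n)=1$ and a solution $j$ exists.

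Next I would fix a representative $j$ with $0 \leq j \leq n-1$ and argue that $j$ and $j+1$ are genuine consecutive integers sitting in one coset. By construction $jq \equiv j+1 \pmod{n}$, and since $jq$ always lies in ${\mathbb{C}}_{[j]}$ by the definition of a cyclotomic coset, the residue $(j+1) \bmod n$ belongs to ${\mathbb{C}}_{[j]}$ as well. The only thing to rule out is degeneracy at the boundary: $j=0$ is impossible because $(q-1)\cdot 0 = 0 \not\equiv 1 \pmod{n}$, and $j=n-1$ is impossible because it would place $0 \equiv j+1$ into ${\mathbb{C}}_{[j]}$, forcing ${\mathbb{C}}_{[j]} = {\mathbb{C}}_{[0]} = \{0\}$ and hence $j=0$, a contradiction. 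Therefore $1 \leq j \leq n-2$, so $j$ and $j+1$ are two consecutive integers lying in the same cyclotomic coset, which is exactly the assertion of the lemma.

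The argument is short, and the only delicate point is the boundary analysis in the last step, where one must ensure that the solution $j$ does not fall at $0$ or $n-1$, values at which ``$j$ and $j+1$'' would either leave the residue range or collapse into the trivial coset $\{0\}$. The two hypotheses are precisely what is needed: primality of $n$ makes the $\gcd$ dichotomous, so that $\gcd(q-1,n)=1$ follows at once, while $m \geq 2$ guarantees $q \not\equiv 1 \pmod{n}$ and thereby the nontriviality of the multiplication-by-$q$ step.
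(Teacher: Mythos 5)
Your proof is correct and follows essentially the same route as the paper: reduce the claim to the solvability of $(q-1)x \equiv 1 \pmod{n}$, observe that $\gcd(q-1,n)=1$ because $n$ is prime, and apply Theorem~\ref{folklore}. Your explicit boundary check that the solution $j$ avoids $0$ and $n-1$ is a small but genuine refinement that the paper's proof glosses over.
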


\begin{proof}
First, note that $\gcd (q, n) = 1$. In order to prove this lemma,
it suffices to show that the congruence $xq \equiv x + 1 (\mod n)$
has at least one solution for some $0 \leq x\leq n-1$ or,
equivalently, the congruence $(q - 1)x \equiv 1$ (mod $n$) has at
least one solution. We know that $\gcd (q - 1, n) = 1$ holds,
because $n > q$ and $n$ is a prime number. Since $q - 1 \neq 0$,
it follows from Theorem~\ref{folklore} that $(q - 1)x \equiv 1$
(mod $n$) has an integer solution $x_0$. Applying the division
algorithm for $x_0$ and $n$ one has $x_0 = n s_0  + r_0$, where
$r_0$ and $s_0$ are integers and $0\leq r_0 \leq n-1$. Since
$(q-1)x_0\equiv 1$ (mod $n$) holds then the congruence
$(q-1)r_0\equiv 1$ (mod $n$) also holds, and the result follows.
\end{proof}

\begin{remark} Note that in Lemma~\ref{nonpril1} it is not
necessary to assume that $n$ is a prime number. In fact, we only
need to suppose that $\gcd (q - 1, n) = 1$ and $\gcd (q, n) = 1$
hold (the latter condition ensures that $C$ has simple roots). But
since the corresponding $q$-ary cosets of BCH codes of prime
length have nice properties, we have assumed that $n$ is prime.
However, if one assumes that $\gcd (q - 1, n) = 1$ and $\gcd (q,
n) = 1$ hold, more good quantum codes can be constructed.
\end{remark}

\begin{theorem}\label{nonpri5}
Let $q \geq 3$ be a prime power, $n > q$ be a prime number and
consider $m= \ {{\operatorname{ord}}_{n}}(q)\geq 2$. Assume that
${\mathbb{C}}_{[s]} \neq {\mathbb{C}}_{[-s]}$, where
${\mathbb{C}}_{[s]}$ is a cyclotomic coset containing two
consecutive integers. Then there exist quantum codes with
parameters ${[[n, n-2m, d\geq 3]]}_{q}$.
\end{theorem}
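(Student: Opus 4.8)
The plan is to produce the quantum code from a single Euclidean dual-containing BCH code via the CSS construction, taking $C_1 = C$ and $C_2 = C^{\perp}$, where $C$ is the cyclic code whose defining set is exactly the coset $\mathbb{C}_{[s]}$ guaranteed by Lemma~\ref{nonpril1}. Under this choice the CSS parameters become $[[n, \dim C - \dim C^{\perp}, d]]_{q}$, and since $\dim C + \dim C^{\perp} = n$ always holds, the dimension collapses to $2\dim C - n$. So the whole argument reduces to three computations: the dimension of $C$, the dual-containment of $C$, and a lower bound on its minimum distance.

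First I would fix the coset $\mathbb{C}_{[s]}$ containing two consecutive integers $a$ and $a+1$ (its existence is exactly Lemma~\ref{nonpril1}) and take $C$ to be the cyclic code of length $n$ with defining set $\mathbb{C}_{[s]}$. Because $n$ is prime, the group $\mathbb{Z}_{n}^{*}$ is cyclic and multiplication by $q$ acts freely on it, so every nonzero $q$-ary cyclotomic coset has cardinality exactly $m = \operatorname{ord}_{n}(q)$; in particular $\mathbb{C}_{[s]}$ is nonzero, since a one-element coset cannot contain two consecutive integers, and hence has $m$ elements. Thus $\deg g(x) = |\mathbb{C}_{[s]}| = m$ and $\dim C = n - m$. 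Since $\alpha^{a}$ and $\alpha^{a+1}$ are roots of $g(x)$, the code has a run of two consecutive powers of $\alpha$ among its zeros, so Theorem~\ref{CC} (the BCH bound) gives $d(C) \geq 3$.

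Next I would verify that $C$ is Euclidean dual-containing, i.e. $C^{\perp} \subseteq C$. For a cyclic code with defining set $Z$ this is equivalent to $Z \cap (-Z) = \emptyset$, where $-Z = \{-z \bmod n : z \in Z\}$. Here $Z = \mathbb{C}_{[s]}$ and $-Z = \mathbb{C}_{[-s]}$, so the condition reads $\mathbb{C}_{[s]} \cap \mathbb{C}_{[-s]} = \emptyset$; since distinct cyclotomic cosets are disjoint, this is precisely the hypothesis $\mathbb{C}_{[s]} \neq \mathbb{C}_{[-s]}$. With dual-containment established, applying the CSS definition to $C_1 = C$ and $C_2 = C^{\perp}$ gives a quantum code of dimension $\dim C - \dim C^{\perp} = (n-m) - m = n - 2m$; moreover $C_{2}^{\perp} = C$ and $C_{1}^{\perp} = C^{\perp}$, so the distance set $(C_1 \setminus C_2) \cup (C_{2}^{\perp} \setminus C_{1}^{\perp})$ equals $C \setminus C^{\perp}$, whence $d \geq d(C) \geq 3$. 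This yields the claimed ${[[n, n-2m, d \geq 3]]}_{q}$ code.

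The main obstacle I anticipate is the bookkeeping around the dual-containment criterion: one must state and use the equivalence $C^{\perp} \subseteq C \iff Z \cap (-Z) = \emptyset$ correctly, together with the identity $-\mathbb{C}_{[s]} = \mathbb{C}_{[-s]}$, and be careful that the single hypothesis $\mathbb{C}_{[s]} \neq \mathbb{C}_{[-s]}$ really does force disjointness. Everything else, namely the coset cardinality $m$ obtained from the primality of $n$, the BCH bound, and the parameter arithmetic, is routine.
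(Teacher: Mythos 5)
Your proof is correct and is in substance the same as the paper's: the paper takes $C_1$ generated by $M^{(s)}(x)$ and $C_2$ generated by $\prod_{i\neq -s} M^{(i)}(x)$, and this $C_2$ is precisely the Euclidean dual of your code $C$, so the two CSS pairs coincide. Your packaging through the dual-containment criterion of Lemma~\ref{AAA} (together with the facts that every nonzero $q$-ary coset modulo a prime has cardinality $m$, that distinct cosets are disjoint, and that the two consecutive integers give $d\geq 3$ via the BCH bound) is exactly the argument the paper compresses into its reference back to the proof of Theorem~\ref{nonpri3}.
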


\begin{proof}
First, note that $\gcd (q, n) = 1$. Choose $C_1$ be code generated
by $M^{(s)}(x)$ and $C_2$ be the code generated by
$\displaystyle\prod_{i} M^{(i)}(x)$, where $i \neq -s$ and $i$
runs through the coset representatives mod $n$. It is easy to see
that the cosets ${\mathbb{C}}_{[s]}$ and ${\mathbb{C}}_{[-s]}$
contain $m$ elements. Proceeding similarly as in the proof of
Theorem~\ref{nonpri3}, the result follows.
\end{proof}

\begin{theorem}\label{nonpri7}
Assume that $q \geq 3$ is a prime power, $n > q$ is a prime number
and consider $m= \ {{\operatorname{ord}}_{n}}(q)\geq 2$. Let
${\mathbb{C}}_{[s]}$ be the cyclotomic coset containing $s$ and
$s+1$. Suppose that all the $q$-ary cosets ${\mathbb{C}}_{[s]},
{\mathbb{C}}_{[s+2]}, \ldots, {\mathbb{C}}_{[s+r]},
{\mathbb{C}}_{[-s]}, {\mathbb{C}}_{[-s-2]}, \ldots,
{\mathbb{C}}_{[-s-r]}$, are mutually disjoint. Then there exist
quantum codes with parameters ${[[n, n-2mr, d\geq r+2]]}_{q}$.
\end{theorem}

\begin{proof}
We know that $\gcd (q, n) = 1$ and the coset ${\mathbb{C}}_{[-s]}$
also contains two consecutive integers, namely, $-s-1$ and $-s$.
Let $C_1$ be the cyclic code generated by the product of the
minimal polynomials
\begin{eqnarray*}
M^{(s)}(x)M^{(s+2)}(x)\cdot\ldots\cdot M^{(s+r)}(x),
\end{eqnarray*}
and let $C_2$ be the cyclic code generated by the polynomial
$g_2(x)$, that is the product of the minimal polynomials
\begin{eqnarray*}
g_2(x) = \displaystyle\prod_{j} M^{(j)}(x),
\end{eqnarray*} where
$j \notin \{ -s-r, \ldots, -s-2,  -s \}$ and $j$ runs through the
coset representatives mod $n$.

From the BCH bound, the minimum distance of $C_1$ is greater than
or equal to $r+2$ because its defining set contains the sequence
of $r+1$ consecutive integers given by $s, s+1, s+2, \ldots, s+r$.
Similarly, the defining set of the code $C$ generated by the
polynomial $h_2(x)=(x^{n}-1)/ g_2(x)$, contains a sequence of
$r+1$ consecutive integers given by $-s-r, \ldots, -s-2, -s-1,
-s$. Again, from the BCH bound, $C$ has minimum distance greater
than or equal to $r+2$. Since $C$ is equivalent to $\displaystyle
C_{2}^{\perp}$, it follows that $\displaystyle C_{2}^{\perp}$ also
has minimum distance greater than or equal to $r+2$. Therefore,
the resulting CSS code have minimum distance greater than or equal
to $r+2$. If $s \in [1, n-1]$ satisfies $\gcd (s, n) = 1$ then the
coset ${\mathbb{C}}_{s}$ has cardinality $m$. In fact, if
$|{\mathbb{C}}_{s}|=c < m$ it follows that $n|s(q^{c}-1)$, so
$n|(q^{c}-1)$, a contradiction. Thus, since $n$ is prime, each one
of the cosets ${\mathbb{C}}_{s}$, where $s \in [1, n-1]$, has
cardinality $m$. Additionally, from the hypothesis, all the
$q$-ary cosets ${\mathbb{C}}_{[s]}, {\mathbb{C}}_{[s+2]}, \ldots,
{\mathbb{C}}_{[s+r]}$, are mutually disjoint. Thus $C_1$ has
dimension $k_1 = n - mr$ and $C_2$ has dimension $k_2 = mr$, since
there exist $r$ disjoint $q$-ary cosets not contained in the
defining set of $C_2$, where each of them has cardinality $m$.
Therefore, the dimension $K$ of the corresponding CSS code equals
$K = n - 2mr$. Since the cosets ${\mathbb{C}}_{[s]},
{\mathbb{C}}_{[s+2]}, \ldots, {\mathbb{C}}_{[s+r]}$,
${\mathbb{C}}_{[-s]}, {\mathbb{C}}_{[-s-2]}, \ldots,
{\mathbb{C}}_{[-s-r]}$, are mutually disjoint, it follows that
$C_2 \subsetneq C_1$. Applying the CSS construction to $C_1$ and
$C_2$, one obtains an ${[[n, n-2mr, d\geq r+2]]}_{q}$ code.
\end{proof}

\begin{example}
Theorem~\ref{nonpri7} has variants as follows: to construct an
${[[19, 13, d\geq 3]]}_{7}$ code, consider $q = 7$, $n = 19$ and
$m=3$. The cosets are given by ${\mathbb{C}}_{2}= \{ 2, 14, 3\}$
and ${\mathbb{C}}_{16}= \{5, 16, 17\}$. Let $C_1$ be the cyclic
code generated by the minimal polynomial $
C_1=\langle{g_{1}}(x)\rangle=\langle M^{(2)}(x)\rangle$ and $C_2$
generated by $g_{2}(x) = \displaystyle\prod_{i} M^{(i)}(x)$, where
$i \notin \{ 16\}$ and $i$ runs through the coset representatives
mod $19$. Then an ${[[19, 13, d\geq 3]]}_{7}$ quantum code can be
constructed. Proceeding similarly, one can get quantum codes with
parameters ${[[31, 25, d\geq 3]]}_{5}$, ${[[71, 61, d\geq
3]]}_{5}$, ${[[11, 1, d\geq 4]]}_{3}$, ${[[31, 19, d\geq
4]]}_{5}$, ${[[31, 13, d\geq 5]]}_{5}$, ${[[71, 51, d\geq
4]]}_{5}$, ${[[71, 41, d\geq 6]]}_{5}$.
\end{example}

%
%
%


\subsection{Construction III - Codes Derived from Steane's Construction}\label{secIV}

In this subsection we construct new families of quantum BCH codes
of prime length by applying Steane's enlargement of nonbinary CSS
construction \cite[Corollary 4]{Hamada:2008}. These new families
have parameters better than the parameters of the quantum BCH
codes available in the literature. Let us recall Steane's code
construction:

\begin{corollary}\cite[Corollary 4]{Hamada:2008}\label{Hama}
Assume we have an $[N_0, K_0]$ linear code $L$ which contains its
Euclidean dual, $L^{\perp}\leq L$, and which can be enlarged to an
$[N_0, {K}^{'}_0]$ linear code $L^{'}$, where ${K}^{'}_0 \geq K_0
+ 2$. Then there exists a quantum code with parameters $[[N_0, K_0
+ {K}^{'}_0 - N_0, d \geq \ min \{d, \lceil \frac{q+1}{q}d^{'}
\rceil \}]]$, where $d = w (L \backslash {L^{'}}^{\perp})$ and
$d^{'} = w (L^{'} \backslash  {L^{'}}^{\perp})$.
\end{corollary}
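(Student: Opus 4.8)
The plan is to recast the statement in the stabilizer (symplectic) formalism and exhibit an explicit self-orthogonal code. Recall that an $\mathbb{F}_q$-linear code $\mathcal{S}\subseteq \mathbb{F}_q^{2N_0}$ that is self-orthogonal for the symplectic form $\langle (a\mid b),(a'\mid b')\rangle = a\cdot b' - a'\cdot b$ and has $q^{2N_0-K}$ elements yields an $[[N_0,K,d]]_q$ code, where $d$ is the minimum symplectic weight of $\mathcal{S}^{\perp_{s}}\setminus \mathcal{S}$. The hypotheses $L^{\perp}\leq L\subseteq L'$ give the chain $L'^{\perp}\subseteq L^{\perp}\subseteq L\subseteq L'$. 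First I would fix nested generator matrices: a matrix $G_{0}$ generating $L'^{\perp}$, extended by rows $G_{\ast}$ (there are $K_0'-K_0$ of them) to a generator matrix of $L^{\perp}$, and record $\dim L'^{\perp}=N_0-K_0'$, $\dim L^{\perp}=N_0-K_0$. The target is an $\mathcal{S}$ with $K=K_0+K_0'-N_0$, i.e.\ of size $q^{2N_0-K_0-K_0'}$.

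Next I would construct $\mathcal{S}$ as a Steane-twisted modification of the plain CSS code obtained from $L'$ and $L^{\perp}$ (the construction recalled above), in which the generators associated with the enlargement cosets $L'/L$ are replaced by mixed $X$/$Z$ generators carrying a fixed scalar $\gamma\in\mathbb{F}_q^{\ast}$, chosen so that the twist is not undone by a single transversal phase (which would merely reproduce the CSS code and give no gain). The dimension is immediate from $\dim L'-\dim L^{\perp}=K_0'-(N_0-K_0)=K_0+K_0'-N_0$. Self-orthogonality is then a routine verification: the $X$-only and $Z$-only blocks pair trivially, while every cross term reduces to an inner product of the shape $(\text{element of }L^{\perp})\cdot(\text{element of }L)=0$, using that $L'^{\perp}$ is self-orthogonal ($L'^{\perp}\subseteq L'$) and that the twisted rows lie in $L^{\perp}$, which is orthogonal to $L\supseteq L'^{\perp}$.

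For the distance I would split $\mathcal{S}^{\perp_{s}}\setminus\mathcal{S}$ into $Z$-type and $X$-type logical representatives. The nontrivial $Z$-type representatives $(0\mid b)$ reduce exactly to $b\in L\setminus L'^{\perp}$, so they contribute weight at least $w(L\setminus L'^{\perp})=d$; this side is straightforward. The $X$-type representatives are those that genuinely involve the enlargement $L'/L$, and bounding their symplectic weight below by $\lceil\frac{q+1}{q}d'\rceil$, with $d'=w(L'\setminus L'^{\perp})$, is the core of the proof. Here the point is that the twist forbids such a logical from being a pure $X$-operator carried by a single $L'$-codeword: its forced $Z$-shadow cannot be fully absorbed by the available $Z$-stabilizers of $L'^{\perp}$, so it must spill onto extra coordinates. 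I would quantify this by a pigeonhole over the $q$ scalars $\lambda\in\mathbb{F}_q$ indexing the shadow value at each coordinate of the $X$-support: a coordinate can be cleared simultaneously in both components for only one value of $\lambda$, forcing an additional support of at least a $1/q$ fraction, whence weight $\geq d'+\tfrac1q d'=\tfrac{q+1}{q}d'$.

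The main obstacle is precisely this last step: turning the coordinate-wise/averaging count into a genuine \emph{lower} bound on the minimum weight over \emph{all} stabilizer-equivalent representatives of a nontrivial enlargement logical, i.e.\ showing the $1/q$ spillover is unavoidable uniformly, not merely on average. Everything else — the dimension, the self-orthogonality, and the $d=w(L\setminus L'^{\perp})$ bound on the $Z$-side — is routine $\mathbb{F}_q$-linear algebra once the nested bases and the twist scalar $\gamma$ are fixed. Combining the two sides gives the quantum distance bound $\min\{\,w(L\setminus L'^{\perp}),\ \lceil\frac{q+1}{q}w(L'\setminus L'^{\perp})\rceil\,\}$, as claimed.
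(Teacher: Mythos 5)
A preliminary remark: the paper itself does not prove this statement. It is quoted as \cite[Corollary 4]{Hamada:2008} (the nonbinary form of Steane's enlargement) and used as a black box, so there is no internal proof to measure your attempt against; I therefore judge the attempt on its own. Your skeleton is right in several places: the chain $L'^{\perp}\subseteq L^{\perp}\subseteq L\subseteq L'$, the twisted-CSS stabilizer, the dimension count $K_0+K_0'-N_0$, and the easy $Z$-side bound by $d=w(L\setminus L'^{\perp})$. But there is one genuine error and one acknowledged gap, and they are really the same issue. The error: twisting the enlargement generators by a \emph{fixed scalar} $\gamma\in\mathbb{F}_q^{\ast}$, i.e.\ taking stabilizer rows of the form $(g\mid \gamma g)$, cannot give the stated distance. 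With that choice, for any $a\in L'\setminus L^{\perp}$ the vector $(a\mid\gamma a)$ is symplectically orthogonal to all three types of generators (check: $(h\mid 0)$ and $(0\mid h)$ give $\pm\gamma\, h\cdot a=0$ since $a\in L'$, $h\in L'^{\perp}$; $(g\mid\gamma g)$ gives $\gamma a\cdot g-\gamma g\cdot a=0$), yet it is not a stabilizer element; it is a nontrivial logical operator of symplectic weight exactly $wt(a)$, which can be as small as $d'=w(L'\setminus L'^{\perp})$. So the scalar twist yields only $d'$, not $\lceil\frac{q+1}{q}d'\rceil$. The correct construction replaces $\gamma$ by a $(K_0'-K_0)\times(K_0'-K_0)$ matrix $A$ over $\mathbb{F}_q$ with no eigenvalue in $\mathbb{F}_q$, acting on the space spanned by the enlargement rows; such an $A$ exists precisely because $K_0'\geq K_0+2$ (e.g.\ the companion matrix of an irreducible polynomial of degree at least $2$). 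Tellingly, your sketch never uses the hypothesis $K_0'\geq K_0+2$.

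The step you yourself flag as ``the main obstacle'' is indeed the missing core, and your averaging/pigeonhole phrasing does not close it; but it closes cleanly via an exact identity rather than an on-average count. For $(a\mid b)$ put $v_{\lambda}=a+\lambda b$ for $\lambda\in\mathbb{F}_q$ and $v_{\infty}=b$. At every coordinate $i$ in the support of $(a\mid b)$, \emph{exactly} $q$ of these $q+1$ vectors are nonzero: if $b_i\neq 0$, exactly one $\lambda$ kills $a_i+\lambda b_i$ while $v_{\infty}$ survives; if $b_i=0$ and $a_i\neq 0$, only $v_{\infty}$ vanishes there. Summing over coordinates gives $\sum_{\lambda\in\mathbb{F}_q\cup\{\infty\}}wt(v_{\lambda})=q\cdot swt(a\mid b)$, so the bound $swt(a\mid b)\geq\frac{q+1}{q}d'$ follows once each $v_{\lambda}$ is shown to lie in $L'\setminus L'^{\perp}$. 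That membership is exactly where the eigenvalue-free $A$ enters: orthogonality to the rows $(g\mid Ag)$ forces the enlargement components $u,v$ of $a,b$ to satisfy $v=uA$ (up to transpose), and then $u+\lambda v=u(I+\lambda A)\neq 0$ for every $\lambda$ because $I+\lambda A$ is invertible; hence no $v_{\lambda}$ collapses into $L\supseteq L'^{\perp}$. Without this ingredient the ``$1/q$ spillover'' is not merely unproven uniformly versus on average --- for the scalar twist it is simply false.
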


Euclidean dual-containing cyclic codes can be derived from
Lemma~\ref{AAA}:

\begin{lemma}\cite[Lemma 1]{Salah:2007}\label{AAA}
Assume that $\gcd(q, n)=1$. A cyclic code of length $n$ over
${\mathbb F}_{q}$ with defining set $Z$ contains its Euclidean
dual code if and only if $Z\cap Z^{-1} =\emptyset$, where
$Z^{-1}=\{-z \ mod \ n\mid z \in Z\}$.
\end{lemma}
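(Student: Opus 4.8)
The plan is to translate the code-containment relation $C^{\perp}\subseteq C$ into an inclusion of defining sets, and then reduce that inclusion to the stated intersection condition by an elementary set computation. First I would fix a primitive $n$th root of unity $\alpha\in{\mathbb F}_{q^m}$; since $\gcd(q,n)=1$, the polynomial $x^n-1$ has $n$ distinct roots $\alpha^0,\ldots,\alpha^{n-1}$, so the defining set $Z=\{i\mid g(\alpha^i)=0\}$ of $C$ is well defined and $\{0,1,\ldots,n-1\}=Z\sqcup Z^{c}$, where $Z^{c}$ denotes the complement. Writing $h(x)=(x^n-1)/g(x)$, the roots of $h(x)$ are exactly $\{\alpha^i\mid i\in Z^{c}\}$, because $g$ and $h$ have no common zero and together account for all $n$ roots of $x^n-1$.

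The key step is to determine the defining set of $C^{\perp}$. From the review of cyclic codes, $C^{\perp}$ has generator polynomial $g(x)^{\perp}=x^{\deg h(x)}h(x^{-1})$. Evaluating at $\alpha^i$ gives $g(\alpha^i)^{\perp}=\alpha^{i\deg h(x)}h(\alpha^{-i})$, and since $\alpha^{i\deg h(x)}\neq 0$, we obtain $g(\alpha^i)^{\perp}=0$ if and only if $h(\alpha^{-i})=0$, that is, if and only if $-i\bmod n\in Z^{c}$. Hence the defining set of $C^{\perp}$ is $Z^{\perp}=\{-i\bmod n\mid i\in Z^{c}\}=-(Z^{c})$. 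I expect this verification to be the only subtle point of the argument, since it relies on correctly passing from the reciprocal polynomial to the negation of exponents modulo $n$.

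Next I would invoke the standard fact that, for cyclic codes, containment reverses defining-set inclusion: a cyclic code with defining set $Z_1$ is contained in one with defining set $Z_2$ precisely when $Z_2\subseteq Z_1$ (more prescribed zeros yield a smaller code). Applied to $C^{\perp}\subseteq C$, this yields the equivalent condition $Z\subseteq Z^{\perp}=-(Z^{c})$. Finally I would unwind this inclusion: for $z\in Z$, the membership $z\in -(Z^{c})$ is equivalent to $-z\bmod n\in Z^{c}$, i.e. to $-z\bmod n\notin Z$. Thus $Z\subseteq -(Z^{c})$ holds exactly when no $z\in Z$ satisfies $-z\bmod n\in Z$, which is precisely $Z\cap Z^{-1}=\emptyset$ for $Z^{-1}=\{-z\bmod n\mid z\in Z\}$. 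Both directions of the stated equivalence follow at once from this last reformulation, completing the proof.
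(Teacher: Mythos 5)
Your argument is correct: computing the defining set of $C^{\perp}$ as $-(Z^{c})$ via the reciprocal polynomial $x^{\deg h(x)}h(x^{-1})$, and then using the inclusion-reversal between codes and defining sets, reduces $C^{\perp}\subseteq C$ to $Z\cap Z^{-1}=\emptyset$ exactly as claimed. Note that the paper itself gives no proof of this lemma --- it is imported verbatim from \cite[Lemma 1]{Salah:2007} --- so there is nothing to compare against in the text; your proof is the standard one and matches the argument in that reference.
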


In Lemma~\ref{nonpril1} of Section~\ref{secIII} we have shown the
existence of, at least, one $q$-ary cyclotomic coset containing
two consecutive integers provided the code length is a prime
number. In what follows we show how to construct good quantum
codes of prime length by applying Steane's code construction. We
begin by presenting an illustrative example:

\begin{example}
Assume that $n=31$ and $q=5$. From Lemma~\ref{nonpril1}, there
exists a cyclotomic coset containing at least two consecutive
integers; here it is the coset ${\mathbb{C}}_{8} =\{8, 9, 14\}$.
Let $C$ be the cyclic code generated by the product of the minimal
polynomials $C=\langle g(x)\rangle=\langle
M^{(4)}(x)M^{(8)}(x)\rangle$. $C$ has defining set
$Z={\mathbb{C}}_{4}\cup {\mathbb{C}}_{8} =\{4, 7, 8, 9, 14, 20\}$
and has parameters ${[31, 25, d\geq 4]}_{5}$. From
Lemma~\ref{AAA}, it is easy to check that $C$ is Euclidean
dual-containing. Furthermore, $C$ can be enlarged to a code
$C^{'}$ with parameters ${[31, 28, d\geq 3]}_{5}$, whose generator
polynomial is $M^{(8)}(x)$. Applying Corollary~\ref{Hama} to $C$
and $C^{'}$ one obtains an ${[[31, 22, d \geq 4]]}_{5}$ code.
\end{example}

\begin{theorem}\label{nonpri5A}
Let $q \geq 3$ be a prime power, $n > q$ be a prime number and
consider that $m= \ {{\operatorname{ord}}_{n}}(q)\geq 2$. Let
${\mathbb{C}}_{[s]}$ be the $q$-ary coset containing $s$ and $s+1$
and consider that $Z={\mathbb{C}}_{[s]} \cup
{\mathbb{C}}_{[s+2]}$, where
${\mathbb{C}}_{s}\neq{\mathbb{C}}_{[s+2]}$. Assume also that
$Z\cap Z^{-1} =\emptyset$ holds. Then there exist quantum codes
with parameters ${[[n, n-3m, d\geq 4]]}_{q}$.
\end{theorem}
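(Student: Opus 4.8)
The plan is to apply Steane's enlargement (Corollary~\ref{Hama}) to a pair of nested cyclic codes built from the prescribed cosets, exactly in the spirit of the worked example preceding this statement. Concretely, I would take $L$ to be the cyclic code with defining set $Z = {\mathbb{C}}_{[s]} \cup {\mathbb{C}}_{[s+2]}$, i.e. $L = \langle M^{(s)}(x) M^{(s+2)}(x)\rangle$, and let $L'$ be the code with the smaller defining set ${\mathbb{C}}_{[s]}$, i.e. $L' = \langle M^{(s)}(x)\rangle$. Since the defining set of $L'$ is contained in that of $L$, we have $L \subseteq L'$, so $L'$ is a genuine enlargement of $L$.

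First I would verify the two structural hypotheses of Corollary~\ref{Hama}. The Euclidean dual-containing condition $L^{\perp} \leq L$ follows directly from Lemma~\ref{AAA} together with the hypothesis $Z \cap Z^{-1} = \emptyset$. For the dimensions, I would use that $n$ is prime: as in the proof of Theorem~\ref{nonpri7}, every coset ${\mathbb{C}}_{t}$ with $t \not\equiv 0 \pmod n$ has cardinality exactly $m$ (otherwise $n \mid q^{c}-1$ for some $c < m$, contradicting $m = \operatorname{ord}_{n}(q)$). The hypothesis $Z \cap Z^{-1} = \emptyset$ forces $0 \notin Z$, so both ${\mathbb{C}}_{[s]}$ and ${\mathbb{C}}_{[s+2]}$ have $m$ elements; being distinct they are disjoint, whence $|Z| = 2m$ and $K_0 = \dim L = n - 2m$, while $K'_0 = \dim L' = n - m$. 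The enlargement inequality $K'_0 \geq K_0 + 2$ then reads $n - m \geq n - 2m + 2$, i.e. $m \geq 2$, which holds by assumption. The resulting quantum dimension is $K_0 + K'_0 - n = (n - 2m) + (n - m) - n = n - 3m$, as claimed.

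Next I would pin down the distances through the BCH bound (Theorem~\ref{CC}). Because ${\mathbb{C}}_{[s]}$ contains the consecutive integers $s, s+1$ and ${\mathbb{C}}_{[s+2]}$ contributes $s+2$, the defining set of $L$ contains the three consecutive integers $s, s+1, s+2$, giving $d(L) \geq 4$; similarly the defining set of $L'$ contains $s, s+1$, giving $d(L') \geq 3$. Since $L \setminus {L'}^{\perp}$ and $L' \setminus {L'}^{\perp}$ consist of nonzero codewords of $L$ and $L'$ respectively, the Steane quantities satisfy $d = w(L \setminus {L'}^{\perp}) \geq 4$ and $d' = w(L' \setminus {L'}^{\perp}) \geq 3$.

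Finally, I would feed these into Steane's distance formula, obtaining a quantum minimum distance of at least $\min\{d, \lceil \frac{q+1}{q} d'\rceil\} \geq \min\{4, \lceil \frac{3(q+1)}{q}\rceil\}$. I expect the one delicate step to be this ceiling: for $q \geq 3$ one has $3 < \frac{3(q+1)}{q} = 3 + \frac{3}{q} \leq 4$, so $\lceil \frac{3(q+1)}{q}\rceil = 4$, whence the distance is at least $\min\{4,4\} = 4$. Thus Corollary~\ref{Hama} yields the desired ${[[n, n-3m, d\geq 4]]}_{q}$ code. The only genuinely subtle points are this ceiling estimate and the cardinality count, both of which rest on the primality of $n$ and the bound $m \geq 2$.
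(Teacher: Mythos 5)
Your proposal is correct and follows essentially the same route as the paper: apply Steane's enlargement (Corollary~\ref{Hama}) to $C=\langle M^{(s)}(x)M^{(s+2)}(x)\rangle$ contained in $C'=\langle M^{(s)}(x)\rangle$, with dual-containment from Lemma~\ref{AAA} and distances from the BCH bound. You are in fact somewhat more careful than the paper, which asserts the coset cardinalities and the estimate $\lceil\frac{q+1}{q}d'\rceil\geq 4$ without the explicit primality and ceiling arguments you supply.
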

\begin{proof}
We know that $\gcd (q, n) = 1$. Let $C$ be the cyclic code
generated the product of the minimal polynomials $\langle
M^{(s)}(x)M^{(s+2)}(x)\rangle$. By hypothesis and from
Lemma~\ref{AAA}, we know that $C$ is Euclidean dual-containing.
$C$ has parameters ${[n, n-2m, d\geq 4]}_{q}$. Let $C^{'}$ be the
cyclic code generated by the minimal polynomial $M^{(s)}(x)$. We
know that $C^{'}$ is an enlargement of $C$ and has parameters
${[n, n-m, d\geq 3]}_{q}$. Since $m\geq 2$, then $k^{'} - k = m
\geq 2$, where $k^{'}$ denotes the dimension of $C^{'}$ and $k$
denotes the dimension of $C$. Applying Steane's code construction
to $C$ and $C^{'}$, since $\frac{q+1}{q}>1$ holds one obtains an
${[[n, n-3m, d\geq 4]]}_{q}$ code.
\end{proof}

Theorem~\ref{nonpri5A} can be generalized in the following way:

\begin{theorem}\label{nonpri7A}
Assume that $q \geq 3$ is a prime power, $n > q$ is a prime number
and consider that $m= \ {{\operatorname{ord}}_{n}}(q)\geq 2$. Let
${\mathbb{C}}_{[s]}$ be the cyclotomic coset containing $s$ and
$s+1$. Assume that $Z={\mathbb{C}}_{[s]}\cup
{\mathbb{C}}_{[s+2]}\cup \ldots\cup {\mathbb{C}}_{[s+r]}$, where
all the $q$-ary cosets ${\mathbb{C}}_{[s+i]}$, $i =0, 2, 3,\ldots,
r$, are mutually disjoint, and suppose that $Z\cap
Z^{-1}=\emptyset$. Then there exist quantum codes with parameters
${[[n, n-m(2r-1), d\geq r+2]]}_{q}$.
\end{theorem}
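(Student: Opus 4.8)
The plan is to mirror the proof of Theorem~\ref{nonpri5A}, now applying Steane's enlargement (Corollary~\ref{Hama}) to a longer run of cosets. First I would take $C$ to be the cyclic code whose defining set is $Z={\mathbb{C}}_{[s]}\cup{\mathbb{C}}_{[s+2]}\cup\ldots\cup{\mathbb{C}}_{[s+r]}$. Since $n$ is prime and the hypothesis $Z\cap Z^{-1}=\emptyset$ forces $0\notin Z$, every element of $Z$ lies in $[1,n-1]$ and is therefore coprime to $n$; by the cardinality argument used in the proof of Theorem~\ref{nonpri7}, each of the $r$ cosets appearing in $Z$ has exactly $m$ elements, so $|Z|=mr$ and $\dim C = n-mr$. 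Because $Z$ contains the run of $r+1$ consecutive integers $s, s+1, \ldots, s+r$, the BCH bound (Theorem~\ref{CC}) gives $d(C)\geq r+2$. The hypothesis $Z\cap Z^{-1}=\emptyset$ together with Lemma~\ref{AAA} shows that $C$ is Euclidean dual-containing, $C^{\perp}\leq C$, which is precisely the hypothesis needed for Steane's construction.

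Next I would produce the enlargement $C'$ by deleting a single coset from the defining set, namely taking $Z'={\mathbb{C}}_{[s]}\cup{\mathbb{C}}_{[s+2]}\cup\ldots\cup{\mathbb{C}}_{[s+r-1]}$, so that $Z'\subset Z$ and hence $C\subseteq C'$. Then $|Z'|=m(r-1)$, $\dim C'=n-m(r-1)$, and $k'-k=m\geq 2$, which verifies the enlargement condition $K'_0\geq K_0+2$ of Corollary~\ref{Hama}. Since $Z'$ still contains the $r$ consecutive integers $s, s+1,\ldots, s+r-1$, the BCH bound gives $d(C')\geq r+1$.

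Finally I would apply Corollary~\ref{Hama} to the pair $(C, C')$. The quantum dimension is
\begin{eqnarray*}
k + k' - n = (n-mr) + (n-m(r-1)) - n = n - m(2r-1),
\end{eqnarray*}
as required. For the distance, the relevant weights satisfy $w(C\setminus {C'}^{\perp})\geq d(C)\geq r+2$ and $w(C'\setminus {C'}^{\perp})\geq d(C')\geq r+1$, so the Steane bound yields
\begin{eqnarray*}
d_Q \geq \min\left\{\, r+2,\ \left\lceil \tfrac{q+1}{q}(r+1)\right\rceil \,\right\}.
\end{eqnarray*}
Since $\frac{q+1}{q}>1$ we have $\frac{q+1}{q}(r+1)>r+1$, whence $\lceil \frac{q+1}{q}(r+1)\rceil\geq r+2$, and therefore $d_Q\geq r+2$. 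This produces the claimed ${[[n, n-m(2r-1), d\geq r+2]]}_{q}$ code.

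The dimension bookkeeping is routine; the only point demanding care is the \emph{choice} of enlargement. The main obstacle is to delete exactly one coset so that both competing distance bounds --- the BCH distance $r+2$ of $C$ and the inflated distance $\lceil\frac{q+1}{q}(r+1)\rceil$ coming from $C'$ --- meet or exceed $r+2$. Deleting the terminal coset ${\mathbb{C}}_{[s+r]}$ is what keeps $Z'$ a block of consecutive integers $s,\ldots,s+r-1$, so that $d(C')\geq r+1$ and the ceiling factor pushes the second bound up to $r+2$; deleting an interior coset would break the consecutive run and weaken the guarantee.
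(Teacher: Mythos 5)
Your proposal is correct and follows essentially the same route as the paper: take $C$ with defining set $Z$, delete the terminal coset ${\mathbb{C}}_{[s+r]}$ to form the enlargement $C'$, verify dual-containment via Lemma~\ref{AAA} and the dimension gap $k'-k=m\geq 2$, and apply Corollary~\ref{Hama}. Your added justifications (that $0\notin Z$ forces every coset in $Z$ to have cardinality $m$, and the explicit check that $\lceil\frac{q+1}{q}(r+1)\rceil\geq r+2$) only make explicit steps the paper leaves implicit.
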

\begin{proof}
We know that $\gcd (q, n) = 1$. Let $C$ be the cyclic code generated by
the product of the minimal polynomials
\begin{eqnarray*}
M^{(s)}(x)M^{(s+2)}(x)\cdot\ldots \cdot M^{(s+r)}(x).
\end{eqnarray*}
Since $Z\cap Z^{-1}=\emptyset$ holds, it implies  from
Lemma~\ref{AAA} that $C$ is Euclidean dual-containing. From the
hypothesis, all the $q$-ary cosets ${\mathbb{C}}_{[s]},
{\mathbb{C}}_{[s+2]}, \ldots, {\mathbb{C}}_{[s+r]}$ are mutually
disjoint, so $C$ has dimension $k = n - mr$ and minimum distance
$d\geq r+2$. Thus $C$ has parameters ${[n, n-mr, d\geq r+2]}_{q}$.
Let $C^{'}$ be the cyclic code generated by the product of the
minimal polynomials
\begin{eqnarray*}
M^{(s)}(x)M^{(s+2)}(x)\cdot\ldots \cdot M^{(s+r-1)}(x).
\end{eqnarray*}
We know that $C^{'}$ is an enlargement of $C$ and has parameters
${[n, n-m(r-1), d\geq r+1]}_{q}$. Since $m\geq 2$ then $k^{'} - k
= m \geq 2$, where $k^{'}$ denotes the dimension of $C^{'}$ and
$k$ denotes the dimension of $C$. Applying Steane's code
construction to the codes $C$ and $C^{'}$ one obtains an ${[[n,
n-m(2r-1), d\geq r+2]]}_{q}$ code, as required.
\end{proof}

\begin{example}
In this example we construct an ${[[31, 16, d\geq 5]]}_{5}$
quantum code. For this purpose we take $n=31$ and $q=5$; then $m=
\ {{\operatorname{ord}}_{n}}(q)=3$. Let $C$ be the cyclic code
generated by the product of the minimal polynomials
$M^{(4)}(x)M^{(6)}(x)M^{(8)}(x)$. It is easy to see that $C$ is
Euclidean dual-containing and has parameters ${[31, 22, d\geq
5]}_{5}$. Let $C^{'}$ be the cyclic code generated by the product
of the minimal polynomials $M^{(4)}(x)M^{(8)}(x)$; $C^{'}$ has
parameters ${[31, 25, d\geq 4]}_{5}$.Thus there exists an ${[[31,
16, d\geq 5]]}_{5}$ quantum code.
\end{example}

We next establish Theorem~\ref{nonpri3Ste}, an analogous to
Theorem~\ref{nonpri3}.

\begin{theorem}\label{nonpri3Ste}
Suppose that $q \geq 5$ is a prime power and $n > q$ is an integer
such that $\gcd (q, n) = 1$. Assume also that $(q - 1) \mid n$ and
$m= \ {{\operatorname{ord}}_{n}}(q)=2$ hold. Then there exist
quantum codes with parameters ${[[n, n-4c, d\geq c+2]]}_{q}$,
where $1\leq c \leq r - 3$ and $r > 3$ is such that $n = r(q-1)$.
\end{theorem}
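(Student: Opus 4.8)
The plan is to reproduce the cyclotomic-coset layout of Theorem~\ref{nonpri3}, but instead of feeding two nested codes to the CSS construction, to produce a single Euclidean dual-containing BCH code together with an enlargement and apply Steane's construction (Corollary~\ref{Hama}), exactly as in Theorems~\ref{nonpri5A} and \ref{nonpri7A}. First I would reverse-engineer the dimensions. With $m=2$ the generic coset has two elements, so Steane with the minimal enlargement gap $K_0' - K_0 = 2$ yields dimension $K_0 + K_0' - n = 2K_0 + 2 - n$; setting this equal to $n-4c$ forces $\dim C = K_0 = n-2c-1$, an odd codimension. Hence the defining set of $C$ must contain an odd number $2c+1$ of elements, which I obtain by including the one-element coset ${\mathbb{C}}_{[r]}=\{r\}$ (Lemma~\ref{fewco} with $l=1$) together with $c$ two-element cosets.

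Concretely, I would set $C = \langle M^{(r)}(x) M^{(r+1)}(x)\cdots M^{(r+c)}(x)\rangle$ and its enlargement $C' = \langle M^{(r)}(x) M^{(r+1)}(x)\cdots M^{(r+c-1)}(x)\rangle$. Because $c \le r-3$, the cosets ${\mathbb{C}}_{[r]}, {\mathbb{C}}_{[r+1]}, \ldots, {\mathbb{C}}_{[r+c]}$ lie among the mutually disjoint cosets already analysed in Theorem~\ref{nonpri3}, with ${\mathbb{C}}_{[r]}$ a singleton and each ${\mathbb{C}}_{[r+j]}=\{r+j,\,r+jq\}$ ($1\le j\le c$) of size two. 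This gives $|Z_C|=2c+1$ and $|Z_{C'}|=2c-1$, so $\dim C = n-2c-1$ and $\dim C' = n-2c+1 = \dim C + 2$; the BCH bound (Theorem~\ref{CC}) applied to the consecutive runs $r,r+1,\ldots,r+c$ and $r,r+1,\ldots,r+c-1$ gives $d_C\ge c+2$ and $d_{C'}\ge c+1$.

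The main obstacle is verifying that $C$ is Euclidean dual-containing, i.e.\ (by Lemma~\ref{AAA}) that $Z_C\cap Z_C^{-1}=\emptyset$, equivalently that $z_1+z_2\not\equiv 0 \pmod n$ for all $z_1,z_2\in Z_C$. The clean way is to note that every element of $Z_C$ lies in the interval $[r,\,r+cq]$, and a short estimate using $q>r$ and $c\le r-3$ shows $0<z_1+z_2<2n$; thus it suffices to exclude the single possibility $z_1+z_2=n$. Writing $Z_C=\{r+j:0\le j\le c\}\cup\{r+jq:1\le j\le c\}$, this splits into three cases according to the types of $z_1,z_2$, each reducing to an integer equation of the form $j+j'=r(q-3)$, $j+j'q=r(q-3)$, or $(j+j')q=r(q-3)$. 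Here I would use $q\ge 5$ and $c\le r-3$ to bound the left-hand sides (the first case is immediate since $r(q-3)\ge 2r>2c$), and crucially the relation $r\mid q+1$ (which follows from $n=r(q-1)$ and $\operatorname{ord}_{n}(q)=2$, forcing $r\le (q+1)/2$ since the code is nonprimitive) to contradict the divisibility conditions the remaining two equations would require. This is the step where the hypotheses $q\ge 5$ and $r>3$ are really consumed.

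Finally I would apply Corollary~\ref{Hama} to $C\le C'$: the gap $\dim C' - \dim C = 2$ meets the requirement $K_0'\ge K_0+2$, the resulting dimension is $\dim C + \dim C' - n = n-4c$, and the distance is $\min\{d_C,\,\lceil \tfrac{q+1}{q}d_{C'}\rceil\}\ge \min\{c+2,\,\lceil \tfrac{q+1}{q}(c+1)\rceil\}=c+2$, since $\tfrac{q+1}{q}(c+1)>c+1$. This yields the claimed ${[[n,\,n-4c,\,d\ge c+2]]}_q$ code.
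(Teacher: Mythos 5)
Your proposal is correct and follows essentially the same route as the paper: the same codes $C=\langle M^{(r)}\cdots M^{(r+c)}\rangle$ and $C'=\langle M^{(r)}\cdots M^{(r+c-1)}\rangle$, the same coset bookkeeping via Lemma~\ref{fewco} and Theorem~\ref{nonpri3}, and the same application of Corollary~\ref{Hama}. The only (cosmetic) differences are that you treat general $c$ directly where the paper proves the extremal case $c=r-3$ and declares the rest similar, and that you organize the dual-containment check as integer equations $z_1+z_2=n$ over elements of $Z_C$ rather than as congruences on coset representatives; both arguments consume the same hypotheses $q\geq 5$, $r<q$, and $r\mid(q+1)$.
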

\begin{proof}
We only prove the existence of an ${[[n, n-4(r-3), d\geq
r-1]]}_{q}$ code, since the constructions of the other codes are
quite similar.

Let $C$ be the cyclic code generated by the product of the minimal polynomials
\begin{eqnarray*}
M^{(r)}(x)M^{(r+1)}(x)\cdot\ldots\cdot M^{(2r-3)}(x).
\end{eqnarray*}
From Lemma~\ref{fewco} and from the proof of
Theorem~\ref{nonpri3}, we know that the $q$-ary cosets given by $
{\mathbb{C}}_{[r]}= \{ r\}, {\mathbb{C}}_{[r+1]}= \{ r+1, \ \ r +
q\}, {\mathbb{C}}_{[r+2]}= \{ r+2, \ \ r + 2q\}, \ldots ,
{\mathbb{C}}_{[2r-3]}= \{ 2r-3, \ \ r + (r-3)q \}$ are mutually
disjoint and each of them has two elements. Therefore, $C$ has
dimension $k = n- 2(r-3) -1$ and minimum distance $d\geq r-1$.

Let us prove that $C$ is Euclidean dual-containing. In fact, if
$(r+i) \equiv -(r+j)$ mod $n$, where $0 \leq i, j\leq r-3$, it
follows that $2r + i + j \equiv 0$ mod $n$. Since the inequality
$2r + i + j < n$ holds because $q\geq 5$, one has a contradiction.
On the other hand, if $(r+i)q \equiv -(r+j)$ mod $n$ holds then
\begin{eqnarray*}
(iq + j)(q-1)\equiv 0 \mod n\Longrightarrow\\
i(q^2 - q) + j(q - 1) \equiv 0 \mod n\Longrightarrow\\
j(q - 1) \equiv i(q-1) \mod n,
\end{eqnarray*}
where the latter congruence holds because
${{\operatorname{ord}}_{n}}(q)=2$. Then the unique solution is
when $i=j$. Let us investigate this case. Seeking a contradiction,
we assume that the congruence $(r+i)q \equiv -(r+i)$ mod $n$ is
true. Then one obtains
\begin{eqnarray*}
(r+i)q \equiv -(r+i) \mod n\Longrightarrow\\
2r +i(q+1) \equiv 0 \mod n \Longrightarrow\\
r(q-3)\equiv i(q+1) \mod n.
\end{eqnarray*}
If $0\leq i\leq r - 4$, then
\begin{eqnarray*}
r(q - 3) -i(q+1) \geq \\
r(q - 3) -(r-4)(q+1)=\\
4q -4r +4 > 0,
\end{eqnarray*}
where the latter inequality holds because $r < q$ since we only
consider nonprimitive BCH codes. Moreover, the inequality $r(q -
3) -i(q+1)< n$ also holds, which is a contradiction. If $i = r -
3$ then the congruence $r(q-3)\equiv (r - 3)(q+1)$ mod $n$ holds,
that is, $4r\equiv 3(q+1)$ mod $n$ holds. Since $r\mid (q+1)$ and
$q+1 > r$ hold, it implies that $q+1\geq 2r$ so, $3(q+1) -4r \geq
2r >0$. Moreover, the inequality $3(q+1) -4r < n$ holds, which is
a contradiction. Therefore, $C$ is Euclidean dual-containing.

Let $C^{'}$ be the cyclic code generated by the product of the
minimal polynomials
\begin{eqnarray*}
M^{(r)}(x)M^{(r+1)}(x)\cdot\ldots\cdot M^{(2r-4)}(x).
\end{eqnarray*}
$C^{'}$ is an enlargement of $C$; $C^{'}$ has dimension $k^{'}= n
- 2(r-4) -1$ and minimum distance $d^{'}\geq r-2$. Since $m=2$
then $k^{'} - k = 2$, where $k^{'}$ denotes the dimension of
$C^{'}$ and $k$ is the dimension of $C$. We know that
$\lceil\frac{q+1}{q}d^{'} \rceil \geq r-1$. Thus, applying
Steane's code construction one has an ${[[n, n-4(r-3), d\geq
r-1]]}_{q}$ quantum code, as required.
\end{proof}

Recall that an ${[[n, k, d]]}_{q}$ code $C$ satisfies the quantum
Singleton bound given by $k + 2d \leq n + 2$. If $C$ attains the
quantum Singleton bound, i. e.,  $k + 2d = n + 2$, then it is
called a quantum maximum distance separable (MDS) code. In the
following two examples we construct quantum MDS-BCH codes:


\subsection{Construction IV - Hermitian dual-containing BCH Codes}\label{secV}

In this subsection we present the fourth proposed construction,
which is based on finding good Hermitian dual-containing BCH
codes. Let us recall some useful concepts.

Suppose that $C$ is a linear code of length $n$ over ${\mathbb
F}_{q^{2}}$. Then its Hermitian dual code is defined by
${C}^{{\perp}_{H}} = \{ y \in \displaystyle {\mathbb
F}_{q^{2}}^{n} \mid y^{q}\cdot x = 0 \ for \ all \ x \in C\}$,
where $y^q = (\displaystyle y_{1}^{q}, \ldots, y_{n}^{q} )$
denotes the conjugate of the vector $y = (y_1, \ldots, y_n )$.

\begin{lemma}\cite[Lemma 13]{Salah:2007}\label{AA}
Assume that $\gcd(q, n)=1$. A cyclic code of length $n$ over
${\mathbb F}_{q^{2}}$ with defining set $Z$ contains its Hermitian
dual code if and only if $Z\cap Z^{-q} =\emptyset$, where
$Z^{-q}=\{-qz \ mod \ n\mid z \in Z\}$.
\end{lemma}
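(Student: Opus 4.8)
The plan is to reduce the Hermitian statement to the Euclidean one (Lemma~\ref{AAA}) by means of the Frobenius automorphism $\sigma : x \mapsto x^{q}$ of ${\mathbb F}_{q^{2}}$, which is an involution because $x^{q^{2}} = x$ for all $x \in {\mathbb F}_{q^{2}}$. First I would record the bridge between the two duals. By the very definition given above, $y \in {C}^{{\perp}_{H}}$ means $\sum_{i} y_{i}^{q} x_{i} = 0$ for every $x \in C$, i.e.\ the conjugate vector $y^{q} = (y_{1}^{q}, \ldots, y_{n}^{q})$ lies in the Euclidean dual $C^{\perp}$. Writing $C^{(q)} = \{(c_{1}^{q}, \ldots, c_{n}^{q}) : c \in C\}$ and using that $\sigma$ is an involution, this is exactly the identity ${C}^{{\perp}_{H}} = (C^{\perp})^{(q)}$. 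Since raising coordinates to the $q$-th power commutes with cyclic shifts, $C^{(q)}$ is cyclic whenever $C$ is; in particular ${C}^{{\perp}_{H}}$ is cyclic, so it is legitimate to speak of its defining set.

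The heart of the argument is to track defining sets through these two operations. For the Frobenius twist, let $c(x) = \sum_{i} c_{i} x^{i}$ be a codeword of $C$ and let $\alpha$ be a primitive $n$th root of unity. Applying $\sigma$ to $c(\alpha^{t})$ and using additivity and multiplicativity of the $q$-th power in characteristic $p$ gives $c(\alpha^{t})^{q} = \sum_{i} c_{i}^{q} \alpha^{itq} = c^{(q)}(\alpha^{tq})$, where $c^{(q)}(x) = \sum_{i} c_{i}^{q} x^{i}$. Because $\gcd (q, n) = 1$, I may set $t \equiv q^{-1} j \bmod n$ and obtain $c^{(q)}(\alpha^{j}) = c(\alpha^{q^{-1} j})^{q}$, which vanishes for all $c \in C$ precisely when $q^{-1} j \in Z$, i.e.\ when $j \in qZ \bmod n$. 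Hence the defining set of $C^{(q)}$ is $qZ \bmod n$. For the Euclidean dual, the standard description (see \cite{Macwilliams:1977}) gives the defining set of $C^{\perp}$ as $(N \setminus Z)^{-1}$, where $N = \{0, 1, \ldots, n-1\}$ and $(N \setminus Z)^{-1} = \{-z \bmod n : z \in N, \ z \notin Z\}$. Composing the two, the defining set of ${C}^{{\perp}_{H}} = (C^{\perp})^{(q)}$ equals $q \cdot (N \setminus Z)^{-1} = \{-qz \bmod n : z \notin Z\}$. Throughout, one should note that $Z$ is a union of $q^{2}$-ary cyclotomic cosets and that multiplication by $q$ and by $-1$ preserve this property, so each set produced above is again a genuine defining set.

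Finally I would invoke the containment criterion for cyclic codes: ${C}^{{\perp}_{H}} \subseteq C$ holds if and only if $Z = Z(C)$ is contained in the defining set of ${C}^{{\perp}_{H}}$, that is, $Z \subseteq \{-qz \bmod n : z \notin Z\}$. It then remains to show this inclusion is equivalent to $Z \cap Z^{-q} = \emptyset$, where $Z^{-q} = \{-qz \bmod n : z \in Z\}$. For one direction, assume $Z \cap Z^{-q} = \emptyset$ and take $i \in Z$; the element $z = -q^{-1} i \bmod n$ satisfies $i = -qz$, and if $z \in Z$ then $i \in Z \cap Z^{-q}$, a contradiction, so $z \notin Z$ and $i$ lies in the required set. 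Conversely, if $i \in Z \cap Z^{-q}$, write $i = -qw$ with $w \in Z$; the assumed inclusion also yields $i = -qz'$ with $z' \notin Z$, and invertibility of $q$ modulo $n$ forces $w = z'$, contradicting $w \in Z$, $z' \notin Z$. This gives the claimed equivalence.

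I expect the main obstacle to be the defining-set computation for the Frobenius-twisted code $C^{(q)}$ (the identity $c^{(q)}(\alpha^{j}) = c(\alpha^{q^{-1} j})^{q}$ and the resulting set $qZ$), together with the bookkeeping that guarantees all the sets involved remain unions of $q^{2}$-ary cyclotomic cosets; once these are in place, the passage through $C^{\perp}$ and the final set-theoretic equivalence are routine and mirror the Euclidean argument behind Lemma~\ref{AAA}.
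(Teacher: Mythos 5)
The paper does not prove this lemma; it is quoted verbatim from \cite[Lemma~13]{Salah:2007}, and your argument is correct and follows essentially the same standard route as that reference: identify ${C}^{{\perp}_{H}}$ with $(C^{\perp})^{(q)}$, track defining sets through conjugation and Euclidean dualization to get $\{-qz \bmod n : z\notin Z\}= N\setminus Z^{-q}$, and read off the containment criterion. Your final element-chasing equivalence could be shortened by noting directly that $Z\subseteq N\setminus Z^{-q}$ is literally the statement $Z\cap Z^{-q}=\emptyset$, but everything you wrote is sound.
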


\begin{lemma}\label{Hermite}\cite[Lemma 17 c)]{Salah:2007}
(Hermitian Construction) If there exists a classical linear ${[n,
k, d]}_{q^{2}}$ code $D$ such that ${D}^{{\perp}_{h}} \subset D$,
then there exists an ${[[n, 2k - n, \geq d]]}_{q}$ stabilizer
code.
\end{lemma}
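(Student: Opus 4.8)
The plan is to realize the claimed stabilizer code through the standard correspondence between Hermitian self-orthogonal $\mathbb{F}_{q^2}$-linear codes and symplectic self-orthogonal $\mathbb{F}_q$-linear codes, and then to read off the dimension and the minimum distance from that correspondence.

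First I would recall the stabilizer framework: an ${[[n, K, d]]}_q$ stabilizer code is equivalent to an $\mathbb{F}_q$-linear subspace $S \subseteq \mathbb{F}_{q}^{2n}$ of $\mathbb{F}_q$-dimension $n - K$ that is self-orthogonal with respect to the trace-symplectic form $\langle (a \mid b), (a' \mid b')\rangle_s = \operatorname{Tr}_{q/p}(a \cdot b' - b \cdot a')$, its distance being the minimum symplectic weight (the number of coordinates $i$ with $(a_i, b_i) \neq (0, 0)$) of a vector in $S^{\perp_s} \setminus S$, where $S^{\perp_s}$ is the trace-symplectic dual.

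Second, and this is the heart of the matter, I would build the bridge between the Hermitian form on $\mathbb{F}_{q^2}^n$ and the trace-symplectic form on $\mathbb{F}_q^{2n}$. Fixing a basis $\{1, \beta\}$ of $\mathbb{F}_{q^2}$ over $\mathbb{F}_q$, each $u \in \mathbb{F}_{q^2}^n$ decomposes coordinatewise as $u = a + \beta b$ with $a, b \in \mathbb{F}_q^n$, which gives an $\mathbb{F}_q$-linear isomorphism $\phi(u) = (a \mid b)$. The key computation is to verify that $\operatorname{Tr}_{q^2/p}(u \cdot v^q)$ coincides, up to a fixed nonzero scalar coming from the basis, with $\langle \phi(u), \phi(v)\rangle_s$; hence the condition $u \cdot v^q = 0$ for all $v$ in a code translates exactly into symplectic orthogonality of the images. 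Moreover $\phi$ acts coordinatewise, so it carries the Hamming weight on $\mathbb{F}_{q^2}^n$ onto the symplectic weight on $\mathbb{F}_q^{2n}$; in particular it sends Hermitian self-orthogonal codes to symplectic self-orthogonal codes while preserving weights.

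Third, I would apply the hypothesis $D^{\perp_H} \subseteq D$. The subcode $D^{\perp_H}$ is Hermitian self-orthogonal, so $S := \phi(D^{\perp_H})$ is symplectic self-orthogonal; since $\dim_{\mathbb{F}_{q^2}} D^{\perp_H} = n - k$, its image $S$ has $\mathbb{F}_q$-dimension $2(n - k) = n - (2k - n)$, so the associated stabilizer code encodes $K = 2k - n$ qudits. The trace-symplectic dual $S^{\perp_s}$ is precisely $\phi(D)$, whence the minimum distance equals the minimum weight of $D \setminus D^{\perp_H}$, which is at least $d(D) = d$ because deleting a proper subcode cannot lower the minimum weight below that of the ambient code $D$. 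This produces the ${[[n, 2k - n, \geq d]]}_q$ code. The main obstacle I anticipate is the second step: verifying the precise isometry identity relating $\operatorname{Tr}_{q^2/p}(u \cdot v^q)$ to the trace-symplectic form of the images. The bookkeeping involving the chosen basis, the two nested traces $\operatorname{Tr}_{q^2/q}$ and $\operatorname{Tr}_{q/p}$, and the quantity $\beta^{q+1}$ has to be arranged so that the correspondence is simultaneously an isometry for the relevant weights and a bijection on the orthogonality relation; once that is in place, the dimension count and the distance claim follow routinely.
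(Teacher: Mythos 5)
The paper offers no proof of this lemma to compare against: it is imported verbatim as \cite[Lemma 17 c)]{Salah:2007}, so the only question is whether your standalone argument is sound. Your route is the standard one from that reference (and from Ketkar et al.): expand ${\mathbb F}_{q^2}^n$ over an ${\mathbb F}_q$-basis $\{1,\beta\}$, transport Hermitian orthogonality to (trace-)symplectic orthogonality, and read off the parameters from the stabilizer correspondence. The dimension count $\dim_{{\mathbb F}_q}\phi(D^{\perp_H})=2(n-k)=n-(2k-n)$, the identification $S^{\perp_s}=\phi(D)$ by containment plus dimension, the coordinatewise weight preservation, and the bound $\min\{wt(c): c\in D\setminus D^{\perp_H}\}\geq d$ are all correct.

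There is, however, one concrete misstatement at exactly the step you single out as the heart of the matter. The identity you propose, that $\operatorname{Tr}_{q^2/p}(u\cdot v^{q})$ equals a fixed nonzero scalar times $\langle\phi(u),\phi(v)\rangle_s$, cannot hold: since $(u\cdot v^{q})^{q}=u^{q}\cdot v$ and the trace is Frobenius-invariant, the form $\operatorname{Tr}_{q^2/p}(u\cdot v^{q})$ is \emph{symmetric} in $u$ and $v$, while the trace-symplectic form is alternating; moreover the cross terms $a\cdot a'$ and $\beta^{q+1}b\cdot b'$ do not disappear under the trace. The correct bridge uses the alternating part: writing $u=a+\beta b$ and $v=a'+\beta b'$, a direct computation gives $u\cdot v^{q}-u^{q}\cdot v=(\beta^{q}-\beta)(a\cdot b'-b\cdot a')$, so the normalized form $(u\cdot v^{q}-u^{q}\cdot v)/(\beta^{q}-\beta)$ equals the symplectic form of $\phi(u)$ and $\phi(v)$ exactly, and composing with $\operatorname{Tr}_{q/p}$ yields the trace-symplectic form. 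Hermitian orthogonality $u\cdot v^{q}=0$ then forces $u^{q}\cdot v=(u\cdot v^{q})^{q}=0$, hence symplectic orthogonality of the images; the converse implication fails pointwise, but you do not need it, since $\phi(D)\subseteq S^{\perp_s}$ together with the dimension count already gives equality. With that one substitution your argument goes through unchanged.
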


\begin{example}
Let us start with an example of how Lemma~\ref{fewco} can be
applied together the Hermitian construction in order to construct
good codes. Assume that $q = 7$, $n = 144$, $m=3$ and $r=3$; the
$q^2$-ary cosets ${\mathbb{C}}_{3}$, ${\mathbb{C}}_{6}$,
${\mathbb{C}}_{9}$ and ${\mathbb{C}}_{12}$ contain only one
element. The other cosets necessary for the construction are
${\mathbb{C}}_{4}=\{4, 52, 100\}$, ${\mathbb{C}}_{5}=\{5, 101,
53\}$, ${\mathbb{C}}_{7}=\{7, 55, 103\}$, ${\mathbb{C}}_{8}= \{ 8,
104, 56\}$, ${\mathbb{C}}_{10}= \{ 10, 58, 106\}$,
${\mathbb{C}}_{11}=\{11, 107, 59\}$. Let $C$ be the cyclic code
generated by the product of the minimal polynomials
$M^{(3)}(x)M^{(4)}(x)M^{(5)}(x)M^{(6)}(x)M^{(7)}(x)
M^{(8)}(x)M^{(9)}(x)\cdot$ $\cdot
M^{(10)}(x)M^{(11)}(x)M^{(12)}(x)$. It is straightforward to show
that $C$ is Hermitian dual-containing and has parameters ${[144,
122, d\geq 11]}_{7^2}$. Thus, applying the Hermitian construction,
one obtains an ${[[144, 100, d\geq 11]]}_{7}$ quantum code.
Similarly one can construct quantum codes with parameters ${[[144,
102, d\geq 10]]}_{7}$, ${[[144, 108, d\geq 9]]}_{7}$, ${[[144,
114, d\geq 8]]}_{7}$, ${[[144, 116, d\geq 7]]}_{7}$, ${[[144, 122,
d\geq 6]]}_{7}$, ${[[144, 128, d\geq 5]]}_{7}$, ${[[144, 130,
d\geq 4]]}_{7}$ and ${[[144, 136, d\geq 3]]}_{7}$.
\end{example}

\begin{theorem}\label{nonpri3A}
Suppose that $q > 3$ is a prime power and $n > q^2$ is an integer
such that $\gcd (q^2, n) = 1$. Assume also that $(q^2 - 1) \mid n$
and $m= \ {{\operatorname{ord}}_{n}}(q^2)=2$ hold. Then there
exist quantum codes with parameters ${[[n, n-4(r-2)-2, d\geq
r]]}_{q}$, where $r$ is such that $n = r(q^2 - 1)$.
\end{theorem}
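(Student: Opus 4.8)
The plan is to mirror the proof of Theorem~\ref{nonpri3}, replacing the CSS construction by the Hermitian construction of Lemma~\ref{Hermite} and working throughout with $q^2$-ary cyclotomic cosets. Concretely, I would build a single cyclic code $D$ over ${\mathbb F}_{q^2}$ whose defining set is the ``upper'' block of cosets, show it is Hermitian dual-containing, and feed it into Lemma~\ref{Hermite}. First I would record the arithmetic consequences of the hypotheses: from $m=\operatorname{ord}_n(q^2)=2$ we get $q^4\equiv 1 \pmod n$ while $n\nmid q^2-1$; together with $(q^2-1)\mid n$ and the nonprimitivity assumption $n<q^4-1$ this forces $r\mid(q^2+1)$ and $r<q^2$, and $\gcd(q^2,n)=1$ holds. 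These are exactly the $q^2$-analogues of the bounds ``$n\mid q^2-1$'' and ``$r<q$'' used in Theorem~\ref{nonpri3}.

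Next I would fix the coset structure. Applying Lemma~\ref{fewco} with $q$ replaced by $q^2$ shows that ${\mathbb{C}}_{[r]}=\{r\}$ is a singleton, while for $1\le l\le r-2$ the coset ${\mathbb{C}}_{[r+l]}=\{r+l,\ r+lq^2 \bmod n\}$ has exactly two elements (since $r\nmid l$); the mutual disjointness of ${\mathbb{C}}_{[r]},{\mathbb{C}}_{[r+1]},\ldots,{\mathbb{C}}_{[2r-2]}$ follows from the same congruence estimates as in Theorem~\ref{nonpri3}, now invoking $r<q^2$. I would then let $D$ be the cyclic code over ${\mathbb F}_{q^2}$ with defining set $Z={\mathbb{C}}_{[r]}\cup{\mathbb{C}}_{[r+1]}\cup\cdots\cup{\mathbb{C}}_{[2r-2]}$. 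Since $\lvert Z\rvert=2(r-2)+1$, the code $D$ has dimension $k=n-2(r-2)-1$, and because $Z$ contains the $r-1$ consecutive integers $r,r+1,\ldots,2r-2$, the BCH bound (Theorem~\ref{CC}) gives $d(D)\ge r$.

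The heart of the argument is verifying that $D$ is Hermitian dual-containing, i.e. $Z\cap Z^{-q}=\emptyset$ by Lemma~\ref{AA}. Since $Z$ and $Z^{-q}=-qZ$ are both unions of $q^2$-ary cosets of size at most two, it suffices to rule out ${\mathbb{C}}_{[r+l]}={\mathbb{C}}_{[-q(r+l')]}$ for all $0\le l,l'\le r-2$. Such an equality would force either $r+l\equiv -q(r+l')\pmod n$ or $r+l\equiv -q^{3}(r+l')\pmod n$; multiplying the second congruence by $q$ and using $q^4\equiv 1$ turns it into the first with $l$ and $l'$ interchanged, so I only need to exclude $r+l+q(r+l')\equiv 0 \pmod n$. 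Expanding the left side gives $r(1+q)+l+ql'$, and since $0\le l,l'\le r-2$ this quantity satisfies $0<r(1+q)+l+ql'\le (2r-2)(q+1)<r(q+1)(q-1)=n$, where the last strict inequality reduces to $3r-2<rq$ and holds because $q>3$. Hence the sum is never divisible by $n$, so $Z\cap Z^{-q}=\emptyset$. I expect this collapsing-and-bounding step to be the main obstacle, both in spotting that the two subcases coincide via $q^4\equiv 1$ and in pinning the resulting positive quantity strictly between $0$ and $n$.

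Finally, applying the Hermitian construction of Lemma~\ref{Hermite} to $D$ yields a stabilizer code with parameters $[[n,\ 2k-n,\ \ge r]]_q=[[n,\ n-4(r-2)-2,\ \ge r]]_q$, which is the desired family.
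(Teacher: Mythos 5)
Your proposal is correct and follows essentially the same route as the paper: the same single code over ${\mathbb F}_{q^2}$ with defining set ${\mathbb{C}}_{[r]}\cup{\mathbb{C}}_{[r+1]}\cup\cdots\cup{\mathbb{C}}_{[2r-2]}$, the same coset-size and dimension count, the BCH bound for $d\geq r$, and Lemma~\ref{Hermite} at the end. The only (valid) variation is in the check $Z\cap Z^{-q}=\emptyset$: you collapse the $-q^{3}$ subcase into the $-q$ subcase via $q^{4}\equiv 1 \pmod n$ and then bound $r(q+1)+l+ql'$ strictly between $0$ and $n$, whereas the paper treats the second subcase directly through the divisibility conclusion $r(q+1)\mid(i+jq)$.
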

\begin{proof}
Let $C$ be the cyclic code generated by the product of the minimal polynomials
\begin{eqnarray*}
M^{(r)}(x)M^{(r+1)}(x)\cdot\ldots \cdot M^{(2r-2)}(x).
\end{eqnarray*}
We first show that $C$ is Hermitian dual-containing. For this, let
us consider the defining set $Z$ of $C$ consisting of the
$q^2$-ary cyclotomic cosets given by ${\mathbb{C}}_{[r]}= \{ r\},
{\mathbb{C}}_{[r+1]}= \{ r+1, \ \ r + q^2\}, {\mathbb{C}}_{[r+2]}=
\{ r+2, \ \ r + 2q^2\}, \ldots, {\mathbb{C}}_{[2r-2]}= \{ 2r-2, \
\ r + (r-2)q^2 \}.$

We know that $\gcd (q, n) = 1$ holds. From Lemma~\ref{AA}, it
suffices to show that $Z\cap Z^{-q}=\emptyset$. Seeking a
contradiction, we assume that $Z\cap Z^{-q}\neq\emptyset$. Then
there exist $i, j$, where $0 \leq i, j\leq r-2$, such that
$(r+j)q^l\equiv -q(r+i)$ mod $n$, where $l=0$ or $l=2$. If $l=0$,
one has $r+j\equiv -q(r+i)$ mod $n$ and so $q(r+i)+ r + j \equiv
0$ mod $n$. Since $q(r+i)+ r + j < n$ and $q(r+i)+ r + j \neq 0$
hold, one has a contradiction. If $l=2$, it implies that
$(r+j)q^2\equiv -q(r+i)$ mod $n$ and since $\gcd (q^2, n) = 1$ and
$rq^2 \equiv r$ mod $n$ one obtains
\begin{eqnarray*}
(r+j)q^2\equiv -q(r+i) \mod n\\
\Longrightarrow r+jq^2\equiv -q(r+i) \mod n\\
\Longrightarrow (q+1)r\equiv -q(i + jq) \mod n\\
\Longrightarrow -q(i + jq)(q-1)\equiv 0 \mod n\\
\Longrightarrow n \mid q(i + jq)(q-1)\\
\Longrightarrow r(q+1)\mid q(i + jq).
\end{eqnarray*}
Since $\gcd (r, q) = 1$ and $\gcd (q+1, q) = 1$ hold it implies
that $r(q+1)\mid (i + jq)$, which is a contradiction because $i +
jq < r(q+1)$. Thus $C$ is Hermitian dual-containing.

It is easy to see that these cosets are mutually disjoint, with
exception of the coset ${\mathbb{C}}_{[r]}$, the other cosets have
two elements. Thus $C$ has dimension $k=n-2(r-2)-1.$ By
construction, the defining set $Z$ of $C$ contains the sequence
$r, r+1, \ldots, 2r-2$, of $r-1$ consecutive integers and, so the
minimum distance of $C$ is greater than or equal to $r$, that is,
$C$ is an ${[n, n-2(r-2)-1, d\geq r]}_{q^2}$ code. Applying the
Hermitian construction to $C$ one can get an ${[[n, n-4(r-2)-2,
d\geq r]]}_{q}$ quantum code, as desired.
\end{proof}

\begin{corollary}\label{nonpri3B}
Suppose $q > 3$ is a prime power and $n > q^2$ is an integer such
that $\gcd (q^2, n) = 1$. Assume also $(q^2 - 1) \mid n$ and $m= \
{{\operatorname{ord}}_{n}}(q^2)=2$. Then there exist quantum codes
with parameters ${[[n, n-4c-2, d\geq c+2]]}_{q}$, where $2\leq c <
r-2$ and $n = r(q^2 - 1)$.
\end{corollary}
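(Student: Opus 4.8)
The plan is to mirror the proof of Theorem~\ref{nonpri3A} exactly, only truncating the family of cyclotomic cosets at an earlier index so that the designed distance drops from $r$ to a variable value $c+2$. Concretely, for a fixed $c$ with $2\leq c < r-2$, I would take $C$ to be the cyclic code generated by the product of consecutive minimal polynomials
\begin{eqnarray*}
M^{(r)}(x)M^{(r+1)}(x)\cdot\ldots\cdot M^{(r+c)}(x),
\end{eqnarray*}
whose defining set $Z$ consists of the $q^2$-ary cosets $\mathbb{C}_{[r]}, \mathbb{C}_{[r+1]}, \ldots, \mathbb{C}_{[r+c]}$. By Lemma~\ref{fewco} the coset $\mathbb{C}_{[r]}=\{r\}$ is a singleton, and by the cardinality and disjointness argument already carried out in the proof of Theorem~\ref{nonpri3} (with $q$ replaced by $q^2$), each of the remaining $c$ cosets has exactly two elements and they are mutually disjoint. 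This gives $|Z| = 2c+1$, hence $\dim C = n - 2c - 1$, and the BCH bound applied to the run of $c+1$ consecutive integers $r, r+1, \ldots, r+c$ yields minimum distance $d\geq c+2$.

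Next I would verify that $C$ is Hermitian dual-containing, i.e.\ that $Z\cap Z^{-q}=\emptyset$, by invoking Lemma~\ref{AA}. The key observation is that the proof of Theorem~\ref{nonpri3A} establishes the emptiness of this intersection for indices ranging over $0\leq i,j\leq r-2$; since here the shifts $i,j$ range only over the smaller set $0\leq i,j\leq c$ (and $c<r-2$), the very same two-case analysis ($l=0$ and $l=2$) applies verbatim and the required contradictions are inherited directly. In particular the bound $i+jq < r(q+1)$ used in the $l=2$ case remains valid a fortiori, so no new estimate is needed. Thus $C$ is an ${[n, n-2c-1, d\geq c+2]}_{q^2}$ Hermitian dual-containing code.

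Finally I would apply the Hermitian construction of Lemma~\ref{Hermite}. Writing $k = n - 2c - 1$, the stabilizer code has dimension $2k - n = 2(n-2c-1) - n = n - 4c - 2$ and inherits minimum distance $d\geq c+2$, which is exactly the claimed parameter set ${[[n, n-4c-2, d\geq c+2]]}_{q}$. I do not anticipate a serious obstacle: the statement is a one-parameter relaxation of Theorem~\ref{nonpri3A}, and essentially every ingredient (singleton coset from Lemma~\ref{fewco}, two-element cosets and disjointness from Theorem~\ref{nonpri3}, the Hermitian-dual condition from the proof of Theorem~\ref{nonpri3A}) can be cited rather than reproved. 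The only point requiring a moment's care is confirming that the Hermitian-dual argument is genuinely monotone in the range of the shift indices, so that restricting from $r-2$ down to $c$ cannot create a new coincidence; this is immediate because shrinking the defining set can only shrink $Z\cap Z^{-q}$.
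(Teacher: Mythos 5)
Your proposal is correct and follows exactly the paper's route: the paper's proof simply takes $C$ generated by $M^{(r)}(x)M^{(r+1)}(x)\cdots M^{(r+c)}(x)$ and says ``proceed as in Theorem~\ref{nonpri3A}''; you have filled in precisely those details (coset sizes, $|Z|=2c+1$, the BCH bound, the a fortiori Hermitian dual-containing check, and the dimension count $2k-n=n-4c-2$). No discrepancies.
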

\begin{proof}
Let $C$ be the BCH code generated by the product of the minimal
polynomials $M^{(r)}(x)M^{(r+1)}(x)\cdot\ldots \cdot
M^{(r+c)}(x)$. Proceeding similarly as in the proof of
Theorem~\ref{nonpri3A}, the result follows.
\end{proof}

\begin{theorem}\label{nonpri7h}
Let $q \geq 3$ be a prime power, $n > q^2$ be a prime number and
consider that $m= \ {{\operatorname{ord}}_{n}}(q^2)\geq 2$. Let
${\mathbb{C}}_{[s]}$ be the cyclotomic coset containing $s$ and
$s+1$. Assume that $Z={\mathbb{C}}_{[s]}\cup
{\mathbb{C}}_{[s+2]}\cup \ldots \cup {\mathbb{C}}_{[s+r]}$, where
all the $q$-ary cosets ${\mathbb{C}}_{[s+i]}$, $i =0, 2, 3,\ldots,
r$, are mutually disjoint, and suppose that $Z\cap
Z^{-q}=\emptyset$. Then there exist quantum codes with parameters
${[[n, n-2mr, d\geq r+2]]}_{q}$.
\end{theorem}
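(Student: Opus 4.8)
The plan is to follow the template of Theorem~\ref{nonpri7} but to replace the CSS construction with the Hermitian construction of Lemma~\ref{Hermite}, working over $\mathbb{F}_{q^2}$ and with $q^2$-ary cyclotomic cosets (since here $m = \operatorname{ord}_n(q^2)$). First I would note that $\gcd(q, n) = 1$, which is immediate because $n > q^2$ is prime. Then I would let $C$ be the $q^2$-ary cyclic BCH code whose generator polynomial is $M^{(s)}(x) M^{(s+2)}(x) \cdots M^{(s+r)}(x)$, so that its defining set is exactly $Z = \mathbb{C}_{[s]} \cup \mathbb{C}_{[s+2]} \cup \cdots \cup \mathbb{C}_{[s+r]}$.

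The Hermitian dual-containing property is essentially handed to us by the hypothesis: since $Z \cap Z^{-q} = \emptyset$ is assumed, Lemma~\ref{AA} gives $C^{\perp_{H}} \subseteq C$ directly. Next I would compute the dimension. Borrowing the cardinality argument from the proof of Theorem~\ref{nonpri7} --- if $|\mathbb{C}_s| = c < m$ then $n \mid q^{2c} - 1$, contradicting $m = \operatorname{ord}_n(q^2)$ --- each of the cosets $\mathbb{C}_{[s+i]}$ (with $s+i \not\equiv 0 \bmod n$) has exactly $m$ elements because $n$ is prime. Counting the indices $i = 0, 2, 3, \ldots, r$ gives $r$ cosets, and the disjointness hypothesis makes them non-overlapping, so $|Z| = mr$ and hence $\dim C = k = n - mr$.

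For the minimum distance I would invoke the BCH bound: since $\mathbb{C}_{[s]}$ contains both $s$ and $s+1$ and each $\mathbb{C}_{[s+i]}$ contains $s+i$, the defining set $Z$ contains the $r+1$ consecutive integers $s, s+1, s+2, \ldots, s+r$, whence $d \geq r+2$ by Theorem~\ref{CC}. Thus $C$ is an $[n, n-mr, d \geq r+2]_{q^2}$ code. Finally, applying the Hermitian construction of Lemma~\ref{Hermite} to $C$ yields a quantum code with parameters $[[n, 2(n-mr)-n, \geq r+2]]_q = [[n, n-2mr, d \geq r+2]]_q$, as claimed.

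I expect no serious obstacle here: the two genuinely nontrivial facts --- mutual disjointness of the cosets and the emptiness of $Z \cap Z^{-q}$ --- are assumed outright in the statement, so the argument reduces to the coset-cardinality count and a correct bookkeeping of the Hermitian construction's $2k - n$ dimension formula. The only point demanding a little care is confirming that each $\mathbb{C}_{[s+i]}$ has full size $m$ (which relies on primality of $n$ together with $s+i \not\equiv 0$), and keeping the cosets $q^2$-ary throughout rather than $q$-ary.
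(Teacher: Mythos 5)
Your proposal matches the paper's own proof essentially step for step: the same code $C$ generated by $M^{(s)}(x)M^{(s+2)}(x)\cdots M^{(s+r)}(x)$, Hermitian dual-containment via Lemma~\ref{AA} from the hypothesis $Z\cap Z^{-q}=\emptyset$, the BCH bound on the $r+1$ consecutive integers $s,\ldots,s+r$, the coset-cardinality count giving $\dim C = n-mr$, and the Hermitian construction of Lemma~\ref{Hermite} to conclude. Your added justification that each coset has full size $m$ (borrowed from the proof of Theorem~\ref{nonpri7}) is a welcome elaboration of a step the paper dismisses as easy, but the route is the same.
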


\begin{proof}
We know that $\gcd (q, n) = 1$ holds. Let $C$ be the cyclic code
generated by the product of the minimal polynomials
\begin{eqnarray*}
M^{(s)}(x)M^{(s+2)}(x)\cdot\ldots\cdot M^{(s+r)}(x).
\end{eqnarray*}
Since $Z\cap Z^{-q}=\emptyset$ holds, it follows from
Lemma~\ref{AA} that $C$ is Hermitian dual-containing. From the BCH
bound, the minimum distance of $C$ is greater than or equal to
$r+2$. It is easy to see that the cosets ${\mathbb{C}}_{[s+i]}$,
where $i=0, 2, 3, \ldots, r$, have $m$ elements and they are
mutually disjoint. Thus $C$ has parameters ${[n, n - mr, d\geq
r+2]}_{q^2}$. Applying the Hermitian construction one can get an
${[[n, n - 2mr, d\geq r+2]]}_{q}$ code.
\end{proof}

We finish this subsection by showing how Lemma~\ref{nonpril1}
works for constructing quantum MDS-BCH codes:

\begin{example}
Let us consider $q=5$ and $n = 13$. Since $\gcd (13, 24) =1$, the
linear congruence $(q^2 -1)x\equiv 1$ mod $n$ has a solution, so
there exists at least one $q^2$-ary coset containing two
consecutive integers, namely, the coset ${\mathbb{C}}_{[6]}= \{6,
7\}$. Choose $C = \langle M^{(6)}(x)\rangle$. Since
${\mathbb{C}}_{[4]} \neq {\mathbb{C}}_{[6]}$, $C$ is Hermitian
dual-containing and has parameters ${[13, 11, d\geq 3]}_{5}$.
Applying the Hermitian construction, an ${[[13, 9, 3]]}_{5}$
quantum MDS-BCH code is constructed. Similarly, we can also
construct an ${[[17, 13, 3]]}_{4}$ and an ${[[17, 9, 5]]}_{4}$
quantum MDS-BCH code.
\end{example}

\section{Code Comparisons}\label{sec4}

In this section we compare the parameters of the new quantum BCH
codes with the ones available in the literature. The codes
available in the literature derived from Steane's code
construction are generated by the same method presented in
\cite[Table I]{Steane:1999} by considering the criterion for
classical Euclidean dual-containing BCH codes given in
\cite[Theorems~3 and 5]{Salah:2007}.

Let us fix the notation:
\begin{itemize}
\item ${[[n, k, d]]}_{q}$ are the parameters of the new quantum codes;

\item ${[[n^{'}, k^{'}, d^{'}]]}_{q}= \\ {[[n^{'},
n^{'}-2m(\lceil(\delta-1)(1-1/q)\rceil), d^{'}\geq\delta]]}_{q}$
are the parameters of quantum codes available in
\cite{Salah:2007};

\item ${[[n^{''}, k^{''}, d^{''}]]}_{q}$ are the parameters of
quantum BCH codes derived from Steane's code construction shown in
\cite[Corollary 4]{Hamada:2008}.
\end{itemize}

Tables~\ref{table1} and \ref{table2} show the new codes derived
from Construction~I and from Theorem~\ref{nonpri7} in Construction
II; Table~\ref{table3} presents new codes derived from
Construction~III and Table~\ref{table4} shows the new codes
derived from Construction~IV.

Checking the parameters of the new quantum BCH codes tabulated,
one can see that the new codes have parameters better than the
ones available in the literature. In other words, fixing $n$ and
$d$, the new quantum BCH codes achieve greater values of the
number of qudits than the quantum BCH codes available in the
literature. As the referee observed, it is interesting to note
that most of our codes of length larger than $q^2 + 1$ are new.

\begin{table}[!pt]
\begin{center}
\caption{Code Comparison \label{table1}}
\begin{tabular}{|c |c |}
\hline  New CSS codes & CSS codes in \cite{Salah:2007}\\
\hline ${[[n, k, d]]}_{q}$ & ${[[n^{'}, k^{'}, d^{'}]]}_{q}$\\
\hline
\hline ${[[11, 1, d\geq 4]]}_{3}$ & ---\\
\hline ${[[13, 1, d \geq 4]]}_{3}$ & ---\\
\hline ${[[1093, 1079, d \geq 3]]}_{3}$ & ${[[1093, 1065, d^{'} \geq 3]]}_{3}$\\
\hline ${[[31, 19, d\geq 4]]}_{5}$ & ${[[31, 13, d^{'}\geq 4]]}_{5}$\\
\hline ${[[31, 13, d\geq 5]]}_{5}$ & ${[[31, 7, d^{'}\geq 5]]}_{5}$\\
\hline ${[[71, 61, d \geq 3]]}_{5}$ & ${[[71, 51, d^{'} \geq 3]]}_{5}$\\
\hline ${[[71, 51, d\geq 4]]}_{5}$ & ${[[71, 41, d^{'} \geq 4]]}_{5}$\\
\hline ${[[73, 61, d\geq 4]]}_{8}$ & ${[[73, 55, d^{'}\geq 4]]}_{8}$\\
\hline ${[[73, 55, d\geq 5]]}_{8}$ & ${[[73, 49, d^{'}\geq 5]]}_{8}$\\
\hline ${[[73, 49, d\geq 6]]}_{8}$ & ${[[73, 43, d^{'}\geq 6]]}_{8}$\\
\hline ${[[73, 43, d\geq 7]]}_{8}$ & ${[[73, 37, d^{'}\geq 7]]}_{8}$\\
\hline
\end{tabular}
\end{center}
\end{table}

\begin{table}[!pt]
\begin{center}
\caption{Code Comparison \label{table2}}
\begin{tabular}{|c |c |}
\hline New CSS codes & Steane's code construction\\
\hline ${[[n, k, d]]}_{q}$ & ${[[n^{''}, k^{''}, d^{''}]]}_{q}$: $L, L^{'}$\\
\hline
\hline ${[[31, 19, d \geq 4]]}_{5}$ & ${[[31, 16, d^{''}\geq 4]]}_{5}$: ${[31, 22, 4]}_{5}$, ${[31, 25, 3]}_{5}$\\
\hline  ${[[31, 13, d \geq 5]]}_{5}$ & ${[[31, 10, d^{''}\geq 5]]}_{5}$: ${[31, 19, 5]}_{5}$, ${[31, 22, 4]}_{5}$\\
\hline  ${[[73, 61, d \geq 4]]}_{8}$ & ${[[73, 58, d^{''}\geq 4]]}_{8}$: ${[73, 64, 4]}_{8}$, ${[73, 67, 3]}_{8}$\\
\hline  ${[[73, 55, d \geq 5]]}_{8}$ & ${[[73, 52, d^{''}\geq 5]]}_{8}$: ${[73, 61, 5]}_{8}$, ${[73, 64, 4]}_{8}$\\
\hline  ${[[73, 49, d \geq 6]]}_{8}$ & ${[[73, 46, d^{''}\geq 6]]}_{8}$: ${[73, 58, 6]}_{8}$, ${[73, 61, 5]}_{8}$\\
\hline  ${[[73, 43, d \geq 7]]}_{8}$ & ${[[73, 40, d^{''}\geq 7]]}_{8}$: ${[73, 55, 7]}_{8}$, ${[73, 58, 6]}_{8}$\\
\hline
\end{tabular}
\end{center}
\end{table}

\begin{table}[!pt]
\begin{center}
\caption{Code Comparison \label{table3}}
\begin{tabular}{|c |c |}
\hline New codes (Construction III) & Steane's code construction\\
\hline ${[[n, k, d]]}_{q}$ & ${[[n^{''}, k^{''}, d^{''}]]}_{q}$\\
\hline
\hline ${[[31, 22, d \geq 4]]}_{5}$ & ${[[31, 16, d^{''}\geq 4]]}_{5}$\\
\hline ${[[31, 16, d \geq 5]]}_{5}$ & ${[[31, 10, d^{''}\geq 5]]}_{5}$\\
\hline ${[[71, 56, d \geq 4]]}_{5}$ & ${[[71, 46, d^{''}\geq 4]]}_{5}$\\
\hline ${[[73, 64, d \geq 4]]}_{8}$ & ${[[73, 58, d^{''}\geq 4]]}_{8}$\\
\hline ${[[73, 58, d \geq 5]]}_{8}$ & ${[[73, 52, d^{''}\geq 5]]}_{8}$\\
\hline ${[[40, 36, 3]]}_{9}$ (MDS) &\\
\hline ${[[60, 56, 3]]}_{11}$ (MDS) &\\

\hline
\end{tabular}
\end{center}
\end{table}

\begin{table}[!pt]
\begin{center}
\caption{Code Comparison \label{table4}}
\begin{tabular}{|c |c |}
\hline New Hermitian Codes (Construction IV)&  Hermitian Codes in \cite{Salah:2007}\\
\hline ${[[n, k, d]]}_{q}$ & ${[[n^{'}, k^{'}, d^{'}]]}_{q}$\\
\hline
\hline ${[[17, 13, 3]]}_{4}$ (MDS) &\\
\hline ${[[17, 9, 5]]}_{4}$ (MDS) &\\
\hline ${[[13, 9, 3]]}_{5}$ (MDS) & \\
\hline ${[[312, 298, d\geq 5]]}_{5}$ & ${[[312, 296, d^{'}\geq 5]]}_{5}$\\
\hline ${[[312, 294, d\geq 6]]}_{5}$ & ${[[312, 292, d^{'}\geq 6]]}_{5}$\\
\hline ${[[312, 290, d\geq 7]]}_{5}$ & ${[[312, 288, d^{'}\geq 7]]}_{5}$\\
\hline ${[[312, 286, d\geq 8]]}_{5}$ & ${[[312, 284, d^{'}\geq 8]]}_{5}$\\
\hline ${[[312, 282, d\geq 9]]}_{5}$ & ${[[312, 280, d^{'}\geq 9]]}_{5}$\\
\hline ${[[312, 278, d\geq 10]]}_{5}$ & ${[[312, 276, d^{'}\geq 10]]}_{5}$\\
\hline ${[[312, 274, d\geq 11]]}_{5}$ & ${[[312, 272, d^{'}\geq 11]]}_{5}$\\
\hline ${[[312, 270, d\geq 12]]}_{5}$ & ${[[312, 268, d^{'}\geq 12]]}_{5}$\\
\hline ${[[144, 128, d\geq 5]]}_{7}$ & ${[[144, 120, d\geq 5]]}_{7}$\\
\hline ${[[144, 122, d\geq 6]]}_{7}$ & ${[[144, 114, d\geq 6]]}_{7}$\\
\hline ${[[144, 116, d\geq 7]]}_{7}$ & ${[[144, 108, d\geq 7]]}_{7}$\\
\hline ${[[144, 114, d\geq 8]]}_{7}$ & ${[[144, 102, d\geq 8]]}_{7}$\\
\hline ${[[144, 108, d\geq 9]]}_{7}$ & ${[[144, 96, d\geq 9]]}_{7}$\\
\hline ${[[144, 102, d\geq 10]]}_{7}$ & ${[[144, 90, d\geq 10]]}_{7}$\\
\hline ${[[144, 100, d\geq 11]]}_{7}$ & ${[[144, 84, d\geq 11]]}_{7}$\\
\hline
\end{tabular}
\end{center}
\end{table}

\begin{remark}
Note that the codes ${[[31, 25, d\geq 3]]}_{5}$ and ${[[1093,
1079, d \geq 3]]}_{3}$ have the same parameters of the
corresponding Hamming codes and the new ${[[71, 61, d \geq
3]]}_{5}$ code can be compared with distance three codes obtained
by shortening Hamming codes.
\end{remark}

\section{Summary}\label{sec5}

We have presented four quantum code constructions generating new
families of good nonprimitive non-narrow-sense quantum BCH codes.
These new quantum codes have parameters better than the ones
available in the literature. Additionally, most of these codes are
generated algebraically.

\section*{Acknowledgment}
I would like to thank the Associate Editor Patrick Hayden and the
anonymous referee for their valuable comments and suggestions that
improve significantly the quality and the presentation of this
paper. I also would like to thank Prof. Reginaldo Palazzo Jr. for
useful discussions with respect to the first quantum code
construction and Dr. J. H. Kleinschmidt for critical reading of
the manuscript. Part of this work was presented in ISITA 2012,
Honolulu-Hawaii. This research has been partially supported by the
Brazilian agencies CAPES and CNPq.

\textbf{Giuliano G. La Guardia received the M.S. degree in pure
mathematics in 1998 and the Ph.D. degree in electrical engineering
in 2008, both from the State University of Campinas (UNICAMP),
Brazil. Since 1999, he has been with the Department of Mathematics
and Statistics, State University of Ponta Grossa, where he is an
Associate Professor. His research areas include theory of
classical and quantum codes, matroid theory, and error analysis.}

\end{document}